%% file: automation_entrepreneurship_4_21_2026.tex
\documentclass[a4paper,12pt, oneside]{article}
\usepackage{newtxtext}
\usepackage{amsmath}
\usepackage{newtxmath}

\usepackage{amsthm}
\usepackage{microtype}
\usepackage[margin=1in]{geometry}
\usepackage{setspace}
\usepackage{fancyhdr}
\usepackage{appendix} 
\usepackage{graphicx}
\usepackage{float}
\usepackage{fp}
\usepackage{rotating}
\usepackage[para,online,flushleft]{threeparttable}
\usepackage{tabulary}
\usepackage{tabularx}
\usepackage{natbib}
\setcitestyle{authoryear,open={(},close={)}}
\usepackage[utf8]{inputenc}
\usepackage[english]{babel}
\usepackage{nicefrac}
\usepackage{placeins}
\usepackage{dcolumn}
\usepackage[mathcal]{eucal}
\usepackage{enumerate}
\usepackage{caption}
\DeclareCaptionFormat{twoline}{\textbf{#1#2}\\[3pt]\textbf{#3}}
\DeclareCaptionFormat{twoline_table}{\textbf{#1#2}\\[3pt]\textbf{#3}}
\usepackage{subcaption}
\usepackage{lscape}
\usepackage{multirow}
\usepackage{paralist}
\usepackage[affil-it]{authblk}
\usepackage[dvipsnames,svgnames]{xcolor}
\usepackage{booktabs}
\usepackage{threeparttablex}
\usepackage{longtable}
\usepackage{pdflscape}
\setTableNoteFont{\fontsize{8}{9.6}\selectfont}
\usepackage{titlesec}
\usepackage{afterpage}
\usepackage{url}
\usepackage{csquotes}
\usepackage{mathtools}
\usepackage{array}
\usepackage{accents}

\newcolumntype{x}[1]{>{\centering\let\newline\\\arraybackslash\hspace{0pt}}p{#1}}
\usepackage{tikz}
\usepackage{textcomp}
\usetikzlibrary{positioning}
\usetikzlibrary{shapes,arrows}
\usepackage{siunitx}
\newcolumntype{d}[1]{D{.}{.}{#1}}
\definecolor{blue}{rgb}{0,0.08,0.5}
\definecolor{red}{rgb}{.6,0,0}
\definecolor{green}{rgb}{0,0.376,0}
\usepackage[bookmarks, pdftitle={}, pdfauthor={}, colorlinks=true, linkcolor=red, citecolor=blue, urlcolor=blue]{hyperref} 
\usepackage{xparse}
\usepackage[hang,flushmargin]{footmisc}
\usepackage[hang,flushmargin]{footmisc}

\ExplSyntaxOn
\NewDocumentCommand{\longdash}{ O{2} }
 {
  --\prg_replicate:nn { #1 - 1 } { \negthinspace -- }
 }
\ExplSyntaxOff

\newcommand{\Expect}{{\rm I\kern-.3em E}}
\newsavebox\mytabularbox
\newcount\figwidthc
\newcount\textwidthc
\newcolumntype{Y}{>{\centering\arraybackslash}X}
\newcommand\totextwidth[1]{%
  \sbox{\mytabularbox}{#1}%
  \figwidthc=\wd\mytabularbox%
  \textwidthc=\textwidth%
  \FPdiv\scaleratio{\the\textwidthc}{\the\figwidthc}%
  \FPmin\scaleratio{\scaleratio}{1}%
  \scalebox{\scaleratio}{\usebox{\mytabularbox}}%
}
\newtheorem{proposition}{Proposition}

\begin{document}   
\pagestyle{empty}

\title{\textbf{Routine Work, Firm Boundaries, and \\ the Rise of Local Supplier Entry}}

\author{
\textnormal{Duha T. Altindag, Nabamita Dutta, John M. Nunley,}\\
\textnormal{R. Alan Seals and Adam Stivers}
\thanks{Altindag (\href{mailto:altindag@auburn.edu}{altindag@auburn.edu}) and Seals (\href{mailto:alan.seals@auburn.edu}{alan.seals@auburn.edu}) are affiliated with Auburn University, Dutta (\href{mailto:ndutta@uwlax.edu}{ndutta@uwlax.edu}), Nunley (\href{mailto:jnunley@uwlax.edu}{jnunley@uwlax.edu}), and Stivers (\href{mailto:astivers@uwlax.edu}{astivers@uwlax.edu}) with University of Wisconsin--La Crosse.}
}

\date{\today}

\maketitle

\thispagestyle{empty}

\begin{abstract}
\singlespacing
\noindent Between 2005 and 2019, U.S.\ business applications rose 40 percent while conversion to employer firms fell by nearly half. We study whether boundary redrawing helps explain this pattern. Structured routine-cognitive work can be governed through deliverables and thinner buyer and supplier interfaces. When such work remains place-bound, outsourcing creates demand for domestic specialist suppliers. Across 722 commuting zones, a one percentage-point higher baseline routine employment share raises applications by 27.8 per 100{,}000 residents. Realized entry concentrates in micro-establishments, with no startup quality gains. Contract and industry evidence point to local supplier entry, not routine-manual displacement.

\vspace{1em}
\noindent \textbf{JEL Codes:} L26, M13, O33, R12, J23

\vspace{0.5em}
\noindent \textbf{Keywords:} routine tasks, business formation, outsourcing, firm boundaries, local reallocation
\end{abstract}

\pagebreak

\section*{Extended Abstract}
\singlespacing

\noindent Between 2005 and 2019, U.S.\ business applications rose sharply while conversion to employer firms weakened. We study whether part of this pattern reflects boundary redrawing by incumbent firms. Our mechanism is interface compression. Structured routine work can be separated from the firm when it can be described, monitored, and evaluated through deliverables. When these activities remain tied to buyers, workers, facilities, or customers, outsourcing can create demand for domestic specialist suppliers.

\noindent We test this idea across 722 U.S.\ commuting zones. Our key exposure is baseline Routine Employment Share, \(RSH\), measured in 2005 as the share of local employment in occupations above the 66th percentile of the routine-task distribution in \citet{deming_growing_2017}. We interpret \(RSH\) as a proxy for structured, codifiable work near the interface-compression margin. Our primary outcome is business applications from the Census Bureau's Business Formation Statistics. We instrument \(RSH\) using a 1990 leave-one-state-out shift-share measure based on historical industry composition and national industry routine employment shares.

\noindent A one percentage-point increase in baseline \(RSH\) raises business applications by 27.8 per 100{,}000 residents from 2005 to 2019. A one standard deviation difference in \(RSH\) corresponds to roughly 86 additional applications per 100{,}000 residents. The result is stable across alternative base years, controls, functional forms, inference checks, and Rotemberg-weight diagnostics. The response is driven by routine-cognitive work rather than routine-manual work, and it remains after conditioning on instrumented industrial-robot exposure.

\noindent The application response is not only filing behavior. Business Dynamics Statistics show that realized establishment entry also moves in the same direction, although by less than applications. The realized-entry response appears after 2010 and is concentrated among micro-establishments with 1 to 4 employees. Startup Cartography data show no improvement in startup quality or quality adjusted entrepreneurial capital. This matches the mechanism. Boundary redrawing predicts more entry on the extensive margin, especially among small local suppliers, not necessarily better startups.

\noindent We then examine the organization of production. Higher routine exposure predicts more active four-digit industries and lower secondary production. Local production becomes broader across industries while production profiles become narrower. SEC Exhibit 10 filings point in the same direction. High-\(RSH\) commuting zones are more exposed to industries whose buyer and supplier contracts shift toward performance, deliverables, and codified buyer governance. Credit-agreement language shows no analogous shift.

\noindent The evidence supports a domestic supplier-entry margin in the adjustment to routine and codifiable work. Prior research shows that routine tasks are easier to source at arm's length across borders. We show that this is not the only margin. When separable activities remain place-bound, routine-cognitive structure can create demand for domestic specialist suppliers. Some new firms arise because incumbents redraw their boundaries and create markets for specialized suppliers.

\pagebreak
\pagestyle{plain}
\setcounter{page}{1}
\onehalfspacing

%%%%%%%%%%%%%%%%%%%%%%%%%%    INTRO %%%%%%%%%%%%%%%%%%%%%%%%%%%%%%%%%%%
\section{Introduction} \label{section:intro}

Routine work is often viewed as tasks that can be moved away from incumbent firms. In the automation literature, its rule-based structure makes it vulnerable to machines and software. Industrial robots have been shown to reduce employment and wages in exposed local labor markets, especially in routine manual production \citep{acemoglu_robots_2020}. In the offshoring literature, the same structure makes routine tasks easier to govern at a distance because they generate fewer unforeseen contingencies and are cheaper to monitor at arm's length \citep{grossman_trading_2008, costinot_adaptation_2011, oldenski_task_2012}. We show that this is only part of the adjustment. Places with more routine work, especially routine-cognitive work, also generate more domestic business applications and micro-establishment entry.

The key distinction is that organizational separation need not imply geographic relocation. Routineness can make an activity easier to separate from the incumbent firm because the work can be described, monitored, and evaluated through deliverables. But many activities that can be separated organizationally still require local delivery. Staffing, facilities support, on-site administrative work, logistics coordination, local compliance, and some health-system support services depend on proximity to buyers, workers, facilities, or local customers. When incumbents outsource these activities, the work may leave the buyer's payroll while remaining in the local economy and creating demand for nearby suppliers. We call this margin \emph{interface compression}, the replacement of internal supervision with a thinner buyer--supplier interface. The expected footprint of this is more local, small supplier firms rather than better or faster-growing startups.

We take this prediction to data using 722 U.S.\ commuting zones from 2005 to 2019. Our key measure is the baseline \emph{Routine Employment Share} (\(RSH\)), defined as the share of local employment in occupations above the 66th percentile of the routine-task distribution in \citet{deming_growing_2017} and measured in 2005. We interpret \(RSH\) as a measure of local exposure to structured, standardized work. Our primary outcome is business applications from the Census Bureau's Business Formation Statistics, which capture entry intent rather than realized employer formation. We relate long differences in business applications per 100{,}000 residents to baseline \(RSH\). Because local occupational composition may be measured with error and correlated with persistent regional conditions, we instrument \(RSH\) with a leave-one-state-out shift-share measure built from 1990 industry composition and national industry routine employment shares.

The entry response is economically meaningful and selective. Our 2SLS estimates imply that a one-standard-deviation difference in \(RSH\), equal to 3.1 percentage points, corresponds to roughly 86 additional business applications per 100{,}000 residents. The effect is driven by routine cognitive rather than routine manual work. Higher-\(RSH\) places also record more realized establishment births, and a shift toward micro-establishments with 1--4 employees. There are no discernible improvements in startup quality. Pre-2005 data show no stable positive relationship between baseline routine intensity and establishment entry. The positive association emerges only after 2005 and strengthens after 2010. Conditioning on instrumented industrial-robot exposure, which has its own independent effect, leaves the coefficient on \(RSH\) largely unchanged \citep{acemoglu_robots_2020}. The evidence, therefore, points away from a generic routine-work displacement story and toward a routine-cognitive supplier-entry margin.

We then ask whether the local entry response is accompanied by changes in the organization of production. At the commuting-zone level, higher routine-cognitive exposure predicts more active four-digit industries and less secondary production. These patterns are consistent with incumbents narrowing their own activity bundles while specialized suppliers absorb peripheral functions. We also use SEC Exhibit 10 filings to examine buyer--supplier contract language. High-\(RSH\) commuting zones are more exposed to industries whose business-to-business contracts shifted toward performance-and-deliverables language, automation-and-technology language, and a broader composite of outsourcing-oriented features. Analogous shifts in credit-agreement language are absent. Taken together, the evidence is consistent with boundary redrawing and supplier entry rather than with a pure worker-displacement story.

This domestic supplier-entry margin is the paper's central contribution. Prior work shows that routine and codifiable tasks are easier to source at arm's length across borders, so routineness can facilitate offshoring and global reorganization \citep{grossman_trading_2008, costinot_adaptation_2011, oldenski_task_2012, oldenski_offshoring_2014}. For activities that remain place-bound, the same routine task structure can also predict a domestic supplier response. High-routine commuting zones generate more local business applications and micro-establishment entry, and they exhibit broader industry presence and contract patterns consistent with suppliers absorbing activities that incumbents shed. Greater arm’s-length governability need not require all separated work to be moved out of the United States. It can also reorganize production within the United States by creating demand for domestic supplier firms.

This interpretation connects to research on firm boundaries and fissuring. The firm-boundaries literature makes contractibility central to the choice between integration and outsourcing \citep{grossman_costs_1986, hart_property_1990, antras_firms_2003, baker_contractibility_2004}. We do not observe contractibility directly. We use routine-task structure as a proxy for standardized work, which may be easier to govern through deliverables, and test whether such work predicts local supplier entry. The fissuring literature emphasizes that lead firms retain control over brands, standards, and customer relationships while shifting employment to contractors, franchisees, staffing firms, and other intermediaries \citep{weil_fissured_2014, weil_understanding_2019, bernhardt_domestic_2016}. That literature has mainly focused on wages, benefits, and job quality \citep{dube_does_2010, goldschmidt_rise_2017}. We share its emphasis on boundary redrawing, but our focus is on the domestic business-formation margin.

Our paper also contributes to entrepreneurship research by emphasizing a demand-side source of entry. Existing work has emphasized necessity entry, startup costs, and the distinction between entry intent and realized employer formation \citep{haltiwanger_who_2013, decker_role_2014, bayard_early-stage_2018, dinlersoz_local_2026}. We provide evidence consistent with new firms arising partly because incumbents redraw their boundaries and create demand for specialized suppliers. This helps interpret the rise in U.S.\ business applications alongside falling conversion to employer firms. A supplier-entry channel predicts more applications and more micro-establishments without requiring an improvement in startup quality or a surge of high-growth firms.

In Section~\ref{section:theory} below, we explain the interface-compression mechanism, formalized in Appendix~\ref{appendix:theory_formal}. Section~\ref{section:data} describes our empirical framework and the data. Section~\ref{section:results} presents the main results. Section~\ref{section:conclusion} concludes.

\section{Interface Compression Mechanism} \label{section:theory}

We introduce the intuition behind interface compression and the testable predictions. Appendix~\ref{appendix:theory_formal} formalizes the structuredness cutoff for outsourcing and the conditional mapping from boundary shifts to local specialist entry in a reduced-form model. The model is an organizing device for the empirical comparative static. It is not a full theory of firm boundaries, automation, or entrepreneurial selection.

Consider an auxiliary bundle \(m\), such as medical coding, staffing, facilities support, on-site administrative work, logistics coordination, or local compliance. When the bundle is kept in-house, the firm must manage the governance tasks attached to those services. These include human resources activities, quality monitoring, access coordination, compliance tracking, and exception resolution. Outsourcing does not make these contingencies disappear. It can move much of that governance problem to a supplier and leave the buyer with a thinner interface, meaning a smaller set of verifiable outcomes that governs the business relationship.

Let \(s\) denote how structured and codifiable the bundle is in the buyer's workflow, with higher \(s\) corresponding to more standardized handoffs. Let \(K_m(\tau)\) denote the cost of governing the bundle internally. Let \(p_m+G_m(s,\tau)\) denote the cost of buying it from a specialized supplier, where \(p_m\) is the supplier price, \(G_m(s,\tau)\) is the buyer's interface cost, and \(\tau\) indexes the buyer's governance environment, including monitoring, record-keeping, communication, and procurement technologies. The buyer-side surplus from outsourcing is
\[
S_m(s,\tau) \equiv K_m(\tau)-\big[p_m+G_m(s,\tau)\big].
\]
In this notation, we suppress any direct dependence of internal governance costs on structuredness and focus on the buyer's external interface cost. The substantive requirement is a single-crossing condition. Structuredness must raise outsourcing surplus relative to internal provision.\footnote{More generally, one could write
\[
S_m(s,\tau)=K_m(s,\tau)-[p_m+G_m(s,\tau)].
\]
The required condition is
\[
\frac{\partial S_m(s,\tau)}{\partial s}
=
K_{m,s}(s,\tau)-G_{m,s}(s,\tau)>0.
\]
Thus, if structuredness also lowers internal governance costs, the mechanism requires that it lower the external interface cost more than it lowers the internal governance cost.}
Under this condition, more structured bundles are easier to govern through deliverables, and each bundle has a structuredness cutoff \(s_m^*(\tau)\) above which outside procurement is optimal.

Interface compression refers to the case in which the buyer-facing cost of external governance falls relative to the cost of internal governance. Improvements in the governance environment can lower the outsourcing cutoff, but our empirical analysis does not directly estimate such changes. Instead, it tests whether locations with more initially routine, structured work exhibit the supplier-entry patterns implied by this margin.

This logic does not require entrants to possess a technological advantage in the underlying task. Their advantage may be organizational. They absorb governance-intensive bundles into a vendor relationship, allowing the buyer to manage deliverables rather than the full internal bundle of supervision, coordination, and compliance. When codification and monitoring make more performance dimensions verifiable, the relative value of internal authority may fall. This comparative static is in the spirit of incomplete-contract and property-rights theories of firm boundaries \citep{grossman_costs_1986}. The same logic can also be read through improved verifiability. Record-keeping, software, and monitoring systems may expand the amount of observable information about a bundle. The buyer need not use all of that information directly. The relevant interface may instead be compressed into a smaller set of contractible summary metrics, making arm's-length procurement more attractive \citep{baker_contractibility_2004}.

If this mechanism is operational, locations with a larger share of routine, structured workflows should generate more demand for specialist suppliers. The local-entry prediction is conditional on the outsourced bundle requiring local delivery. Let \(\lambda_m\) denote the local-delivery intensity of bundle \(m\). If \(\lambda_m=0\), outsourcing may occur, but the model has no prediction for local supplier entry because the work can be supplied remotely or from abroad. If \(\lambda_m>0\), some outsourced demand must be served near the buyer, workers, facilities, or customers. In that case, supplier entry should appear in the relevant local service market, which we proxy empirically with commuting zones.

Empirically, we expect this mechanism to be strongest for routine-cognitive auxiliary bundles, such as administrative, compliance, staffing, coordination, and back-office services. Routine-manual work may instead adjust through automation, plant-level restructuring, or nonlocal production sourcing, and therefore need not generate the same local supplier-entry response.

Because the entrant's advantage is organizational rather than frontier technological, the most direct empirical footprint is on the extensive margin. We expect more business applications and more supplier establishments, especially micro-establishments where local service provision is fragmented. Business applications are observed before employer realization, so we interpret them as an early measure of expected supplier-entry demand. Realized establishment births are a downstream validation of the same margin. The mechanism does not require entrants to be higher-quality startups.

The same logic also implies organizational reallocation within incumbent firms. As peripheral bundles move outside firm boundaries, incumbent firms become narrower even as production becomes broader across industries. In observable terms, the mechanism should leave two coarse signatures. First, local economies should have more active narrow supplier industries. Second, incumbent firms should report less secondary production. These are the predictions we take to the data.\footnote{Interface compression may also lower the cost of exploiting inter-firm comparative advantage, so boundary redrawing can generate productivity gains through finer specialization. In our empirical analysis, we treat that as a downstream implication of the mechanism rather than as a separately identified channel.}

%%%%%%%%%%%%%%%%%%%%%%%%%%%%%%%%%%%%%%%%%%%%%%%%%%%%%%%%%%%%%%
\section{Data and Empirical Framework} \label{section:data}
%%%%%%%%%%%%%%%%%%%%%%%%%%%%%%%%%%%%%%%%%%%%%%%%%%%%%%%%%%%%%%

We study whether business entry evolved differently across commuting zones with different initial task composition. Our empirical design relates long-difference changes in entrepreneurship to baseline routine employment share, a task-composition proxy for the local prevalence of structured, standardized work near the codification margin. We describe the construction of \(RSH\) in Section~\ref{section:rsh}, the entrepreneurship outcomes in Section~\ref{section:yvars}, and the identification strategy in Section~\ref{section:estimation}.

\subsection{Measuring Baseline Routine Employment Share}\label{section:rsh}

Because no direct commuting-zone measure of codifiability is available, we use \emph{Routine Employment Share} (\(RSH\)) as a task-composition proxy for the local prevalence of structured, repetitive work. Specifically, \(RSH\) is the share of local employment in occupations whose routine-task intensity, as measured by \citet{deming_growing_2017}, lies above the 66th percentile of the national occupation distribution.\footnote{Formally, we define
\[
RSH_{ct} = \frac{\sum_j Emp_{cjt}\,\mathbf{1}\{RoutineIntensity_j > p_{66}\}}{\sum_j Emp_{cjt}},
\]
where \(Emp_{cjt}\) denotes employment in occupation \(j\) in commuting zone \(c\) and year \(t\), \(RoutineIntensity_j\) is the occupation-level routine-intensity measure from \citet{deming_growing_2017}, and \(p_{66}\) is the 66th percentile of the occupation-level routine-intensity distribution. We construct commuting-zone employment by matching occupations at the six-digit SOC level and aggregating from Public Use Microdata Areas (PUMAs) to commuting zones using Census crosswalks and population allocation factors.}

Because the Deming routine-intensity index averages two O*NET items, the degree of job automation and the importance of repeatedly performing the same physical or mental activities, higher \(RSH\) indicates that a larger share of local employment is concentrated in occupations with repetitive and standardized tasks. We use this measure as a proxy for local exposure to codifiable work near the interface-compression margin. It should not be interpreted as a measure of realized technology adoption or later adoption of any particular tool, process, or equipment.

Appendix Table~\ref{tab:aes_correlations} situates \(RSH\) relative to related task-based measures. \(RSH\) is positively correlated with routine cognitive task content and repeated work measures, and negatively correlated with routine manual task intensity.\footnote{For example, \(RSH\) correlates 0.708 with routine cognitive tasks, 0.854 with the O*NET repetition measure, and \(-0.294\) with routine manual tasks.} These correlations clarify the margin captured by \(RSH\). The measure is especially useful for this paper because it captures structured work that may be easier to govern through deliverables and vendor relationships.

Throughout the empirical analysis, we treat \(RSH\) as a task-based measure of initial routine intensity. Our main specifications use \(RSH_{c,2005}\), measured at the start of the sample window. Because the occupation-level routine scores are national, cross-sectional variation in \(RSH\) comes entirely from differences in local occupational composition. Table~\ref{tab:summary_stats} shows that the mean of \(RSH_{c,2005}\) is 31.4 percentage points, with a standard deviation of 3.1. Appendix Figure~\ref{fig:aes_map} shows substantial geographic dispersion across commuting zones.

\subsection{Entrepreneurship Outcomes}\label{section:yvars}

Our primary outcome is business entry intent, measured using business applications from the Census Bureau's Business Formation Statistics (BFS).\footnote{\url{https://www.census.gov/econ/bfs/}} BFS records Employer Identification Number (EIN) applications, which are typically filed before employer operations begin. Because not all applications become operating employer firms, we interpret BFS as a forward-looking measure of entry intent rather than realized firm formation or firm survival. \citet{bayard_early-stage_2018} show that BFS applications strongly predict subsequent employer business formation.

BFS is available at the county-year level. We aggregate counties to commuting zones using the county-to-commuting-zone crosswalk and scale applications by population to obtain business applications per 100{,}000 residents.\footnote{The crosswalk is from \citet{autor_growth_2013}, available at \url{https://www.ddorn.net/data.htm}. We use Census 2000 population counts aggregated to commuting zones for per-capita scaling.} We focus on the period from 2005 to 2019, when BFS is available before the COVID-19 shock. Figure~\ref{fig:BA_over_time} plots the population-weighted annual mean of business applications per 100{,}000 residents across commuting zones. Applications decline through 2009 and then rise steadily, especially after 2013.

Our main BFS outcomes are long differences in business applications per 100{,}000 residents. Table~\ref{tab:summary_stats} reports the descriptive statistics. The mean change from 2005 to 2019 is 145.8. A decline of 52.7 between 2005 and 2010 is followed by an increase of 198.5 between 2010 and 2019, consistent with weak entry intent during the Great Recession and stronger subsequent growth. Appendix Figure~\ref{fig:busapps_change_map} shows substantial geographic dispersion in the 2005 to 2019 change, with a standard deviation of 599.1.

We also use the Census Bureau's Business Dynamics Statistics (BDS) to examine realized establishment entry. Unlike BFS, which captures applications, BDS records realized employer business dynamics. We construct total establishment entry per 100{,}000 residents at the commuting-zone level using the same crosswalk and population scaling. BDS serves two purposes in the paper. First, it allows us to assess whether the BFS application response maps into realized entry. Second, because it extends well before 2005, it provides the closest available outcome for timing and placebo exercises.

We augment our analysis with County Business Patterns (CBP) to characterize the establishment-level and organizational margins of adjustment. CBP provides annual county-level counts of employer establishments by detailed NAICS industry and employment-size class. We aggregate counties to commuting zones using the same county-to-commuting-zone crosswalk. These data allow us to construct the establishment-size distribution and the number of active four-digit NAICS industries in each commuting zone. Combining CBP employment weights with off-diagonal industry shares from the 2007 BEA Make Table allows us to construct a commuting-zone proxy for exposure to industries that produce outside their primary commodity classification.

Finally, we use data from the Startup Cartography Project \citep{guzman_state_2020}.\footnote{Data are available at \url{https://www.startupcartography.com/data} and via the Harvard Dataverse at \url{https://dataverse.harvard.edu/dataset.xhtml?persistentId=doi:10.7910/DVN/BMRPVH}.} Available through 2016, these data allow us to examine quality-adjusted entrepreneurship beyond entry intent alone. In particular, the startup formation rate (SFR) measures the quantity of new business registrants within a population, the Entrepreneurial Quality Index (EQI) measures the average growth potential within a group of startups, and realized entrepreneurial capital per input (RECPI) measures the quality-adjusted quantity of entrepreneurship, or equivalently the expected number of startups in a region or cohort that later achieve a significant growth outcome.\footnote{In the Startup Cartography Project, RECPI is defined as \(SFR \times EQI\).} Together, BFS, BDS, CBP, and the Startup Cartography outcomes let us distinguish entry intent, realized entry, establishment-scale adjustment, organizational reallocation, and startup quality.

\subsection{Estimation and Instrumental Variables}\label{section:estimation}

Our empirical specification relates long-difference changes in entrepreneurship to baseline routine employment share:
\begin{equation}
\Delta y_c = \beta\,RSH_{c,2005} + X_c'\gamma + \delta_{s(c)} + \varepsilon_c,
\label{eq:ld_baseline}
\end{equation}
where \(\Delta y_c\) is the change in the outcome of interest in commuting zone \(c\) over the relevant window. For commuting zones that cross state borders, \(s(c)\) denotes the state containing the largest share of the commuting zone's 2000 population. We consider business applications per 100{,}000 residents as our primary outcome. \(RSH_{c,2005}\) is baseline routine employment share, \(X_c\) is a vector of controls, and \(\delta_{s(c)}\) are state fixed effects. This control set follows the specification in \citet{autor_growth_2013}.\footnote{\label{fn:controls}
They include the unemployment rate, manufacturing employment share, female manufacturing employment share, the college-to-non-college employment ratio, foreign-born population share, and the ratio of workers aged 21 to 55 to workers aged 56 and older. Census-based controls are constructed using the 2000 Decennial Census microdata via IPUMS USA. The unemployment rate is measured in 2005.} All regressions are weighted by commuting-zone population share, measured using 2000 population. For baseline specifications, we report standard errors clustered by state. The coefficient \(\beta\) measures whether business entry changed more in commuting zones with higher baseline routine employment share.

A central concern is that \(RSH_{c,2005}\) may be measured with error and may also be correlated with persistent local characteristics that predict later entrepreneurship. To address these concerns, we instrument \(RSH_{c,2005}\) with a predetermined shift-share measure based on 1990 industry composition:
\begin{equation}
Z_{c,1990} = \sum_i E_{i,c,1990} \times R_{i,-s(c),1990},
\label{eq:rsh_iv}
\end{equation}
where \(E_{i,c,1990}\) is commuting zone \(c\)'s employment share in industry \(i\) in 1990 and \(R_{i,-s(c),1990}\) is the leave-one-state-out routine employment share of industry \(i\) in 1990. Intuitively, the instrument assigns greater predicted routine intensity to commuting zones that were more specialized in 1990 in industries that employed more routine workers elsewhere in the country. If historical industrial structure predicts baseline occupational routine intensity in 2005, then \(Z_{c,1990}\) should strongly predict \(RSH_{c,2005}\), which we confirm in the first stage.

This empirical strategy is an exposure design based on predetermined local industry composition. The identifying assumption is that, conditional on controls and state fixed effects, 1990 exposure to nationally routine-intensive industries affects post-2005 entrepreneurship through baseline routine occupational structure rather than through other persistent local industry shocks. This is a demanding restriction because historical industry composition may also proxy for durable differences in local demand, supplier networks, human capital, industrial decline, or business formation capacity. We address this concern in three ways. First, the exposure shares are measured in 1990, well before the 2005 to 2019 outcome window. Second, the industry routine shares are computed leave-one-state-out, which reduces mechanical correlation with local shocks and state-specific forces. Third, the specifications absorb broad initial commuting-zone conditions and common state-level changes through controls and state fixed effects. We further assess the restriction using pre-period outcomes, alternative controls, timing exercises, and shift-share diagnostics \citep{adao_shift-share_2019, borusyak_quasi-experimental_2022}.

%%%%%%%%%%%%%%%%%%%%%%%%%%%%%%%%%%%%%%%%%%%%%%%%%%%%%%%%%%%%%%
\section{Results} \label{section:results}
%%%%%%%%%%%%%%%%%%%%%%%%%%%%%%%%%%%%%%%%%%%%%%%%%%%%%%%%%%%%%%

\subsection{Baseline Estimates}

Table~\ref{tab_baseline_results} reports baseline 2SLS estimates for long differences in business applications per 100{,}000 residents. All specifications include state fixed effects and baseline controls. They are weighted by commuting-zone population share, with standard errors clustered by state. Column 1 reports the first stage. The coefficient on the instrument is 0.943, and the Kleibergen--Paap \(rk\) Wald \(F\)-statistic is 81.8. Historical industry composition therefore strongly predicts baseline routine employment share.

The second stage estimates are positive in all three outcome windows. For 2005 to 2019, a 1 percentage-point higher \(RSH_{c,2005}\) is associated with 27.8 additional business applications per 100{,}000 residents. Since the cross-CZ standard deviation of \(RSH_{c,2005}\) is 3.1 percentage points, this corresponds to roughly 86 additional applications per 100{,}000 residents for a one-standard-deviation difference in baseline routine intensity. The estimates are also positive within both subperiods. A 1 percentage-point higher \(RSH_{c,2005}\) is associated with 8.9 additional applications per 100{,}000 residents from 2005 to 2010 and 19.0 additional applications per 100{,}000 residents from 2010 to 2019. The larger post-2010 estimate suggests that the entry response builds over time rather than appearing only in the early part of the sample.

Figure~\ref{fig_year_by_year_BA_long_dif} examines this timing directly. Each point comes from a separate 2SLS regression in which the outcome is the cumulative change in business applications between 2005 and year \(t\). The coefficients are small and imprecise early in the sample. They become larger over time, especially after the mid-2010s. This pattern is consistent with a gradual supplier-entry response in places with more structured, routine work at baseline. It is less consistent with an immediate level shift in business applications.

\subsection{Robustness Checks}

The baseline relationship between \(RSH_{c,2005}\) and business applications is stable across alternative constructions and specifications. Appendix Table~\ref{tab:robustness} reports the main checks. Replacing the 1990 shift-share instrument with a 1980-based version gives a full-period estimate of 34.2 (Panel~A). Estimating the relationship by OLS gives a smaller but still positive and statistically significant coefficient of 15.0 (Panel~B). The estimate also remains positive when baseline controls enter as quartile indicators rather than continuous covariates (Panel~C) and when the outcome is expressed in logs (Panel~D). These checks indicate that the full-period relationship does not hinge on the historical base year, the estimator, or the functional form.

A separate concern is that \(RSH_{c,2005}\) may capture differential exposure to the Great Recession rather than baseline routine intensity. Panel~E addresses this concern by adding a Bartik-style recession control, constructed from 2005 industry shares and leave-one-out national employment growth from 2007 to 2010. The full-period coefficient remains close to the baseline estimate at 28.4. Panel~F redefines \(RSH\) using the 50th, 75th, and 90th percentiles of the national routine-intensity distribution rather than the 66th percentile. The coefficient remains positive in every case. Across these variants, the 2010 to 2019 estimate generally exceeds the 2005 to 2010 estimate, which matches the gradual timing pattern in Figure~\ref{fig_year_by_year_BA_long_dif}.\footnote{A related concern is that the 1990 shift-share instrument may proxy for differential exposure to later state policy changes that independently affect business entry. Appendix Table~\ref{table_policy_balance} addresses this concern by relating the instrument to 2005 to 2019 changes in right-to-work status, corporate tax rates, effective minimum wages, and non-compete enforceability. Across all four outcomes, the estimated relationships are small and statistically indistinguishable from zero.}

Panel~G examines inference. The full-period estimate remains statistically significant under White heteroskedasticity-robust standard errors, census-division clustered standard errors with wild-bootstrap \(p\)-values, and Conley spatial standard errors. Figure~\ref{fig:leave_one_out} adds a state influence diagnostic by re-estimating the baseline model while omitting one state at a time. The resulting coefficients are uniformly positive and remain close to the full-sample benchmark. No single state drives the baseline pattern.

We also examine the industry sources of identifying variation. Following \citet{goldsmith-pinkham_bartik_2020}, Appendix Table~\ref{tab:rotemberg_top10} reports Rotemberg weights using the common-shock approximation. This approximation is nearly identical to the exact leave-one-state-out instrument in the CZ data, with a correlation of 0.9994. The top 10 industries account for slightly more than 50 percent of the total absolute Rotemberg mass. This finding indicates that the baseline estimate is not mechanically driven by one sector.

Appendix Table~\ref{tab:bartik_exact_sensitivity} tests the concern of whether top industries drive the estimates. When the instrument is reconstructed using only the top-weight industries, the full-period estimate remains positive and statistically significant, ranging from 30.4 to 35.8 across the top-1 through top-10 constructions. When the top industries are removed sequentially, the estimate remains positive and of similar magnitude through the first six exclusions, with coefficients between 21.5 and 24.6. The estimates become less precise only after larger shares of the identifying variation are deliberately removed. These results show that the baseline relationship is not driven by a single dominant industry, although a relatively small set of industries accounts for much of the shift-share signal.

Appendix Table~\ref{tab:bartik_pooled_iv} provides a final check by entering the high-weight industries jointly as separate instruments. Using the top five standardized industry shares yields 2SLS and LIML estimates of 26.9 and 27.1. Using the top ten yields estimates of 28.0 and 28.5. These estimates are close to the baseline estimate of 27.8, and the corresponding first stages remain strong.

\subsection{Intended versus Realized Entry}

We next ask whether the increase in business applications appears in realized establishment entry. Table~\ref{tab_BDS} reports 2SLS estimates using the change in BDS establishment entries per 100{,}000 residents as the outcome. The coefficients are 2.87 for 2005 to 2019, 0.27 for 2005 to 2010, and 2.60 for 2010 to 2019. Because average BDS establishment entry declines over the full period in our sample, these estimates should be read comparatively. Commuting zones with higher baseline \(RSH_{c,2005}\) experienced a smaller decline, or faster growth, in realized establishment entry than otherwise similar commuting zones with lower baseline routine employment share. The realized-entry signal is positive, and most of it comes after 2010.

The realized-entry response is smaller than the business-application response. Over the full period, the coefficient in Table~\ref{tab_BDS} is roughly one-tenth of the baseline BFS coefficient in Table~\ref{tab_baseline_results}. That difference is expected. BFS captures entry intent, while BDS captures the narrower set of businesses that become operating employer establishments. At the same time, the positive BDS estimates make it harder to read the BFS results as filing behavior or administrative churn alone.

BDS data also provide us with a timing window that BFS cannot, since the BDS series extends well before 2005. Figure~\ref{fig:bds_timing_annual_2sls} plots annual 2SLS estimates relating baseline \(RSH_{c,2005}\) to cumulative changes in BDS establishment entry relative to 2005.\footnote{For years before 2005, the points are backward cumulative differences relative to 2005. For years after 2005, the points are cumulative changes since 2005.} The pre-2005 coefficients show no stable positive buildup. If anything, they are mostly negative and drift toward zero as 2005 approaches. Positive coefficients emerge only later in the post-2005 period, especially after 2015. This pattern is difficult to reconcile with a story in which high-\(RSH\) commuting zones were already on a stronger entry path before 2005. Appendix Figures~\ref{fig:bds_timing_annual_rf} and \ref{fig:bds_timing_5yr_2sls} reinforce this interpretation. Neither the reduced-form annual estimates nor the non-overlapping five-year windows reveal a stable positive pre-2005 relationship.

Table~\ref{tab_size_shares} provides evidence on where the realized response appears in the establishment-size distribution. Using County Business Patterns data, we find that higher \(RSH_{c,2005}\) shifts activity toward micro-establishments and away from establishments with 5 to 19 employees. Over 2005 to 2019, a 1 percentage-point increase in \(RSH_{c,2005}\) raises the share of establishments with 1 to 4 employees by 0.197 percentage points and lowers the share with 5 to 19 employees by 0.111 percentage points. This pattern is sharper from 2010 to 2019. This finding is consistent with interface compression. Boundary redrawing should generate more entries on the extensive margin, especially among small local suppliers. Following \citet{haltiwanger_who_2013}, we interpret this as evidence on the scale and composition of entry rather than as support for a generic small-firm growth.

\subsection{Quality of Entrants}

The mechanism predicts more supplier entry, not necessarily better startups. We therefore ask whether the increase in business applications is accompanied by higher quality entrepreneurship. To do so, we use the Startup Cartography data. Because these data are available only through 2016, all outcomes are measured as long differences from 2005 to 2016. The sample is restricted to the 680 commuting zones covered by Cartography.

Column~(1) of Table~\ref{tab:cartography} re-estimates the BFS application specification on this matched sample. The coefficient remains positive and precisely estimated, although it is smaller than the full-period estimate in Table~\ref{tab_baseline_results}. The application response is therefore still visible in the sample used to study startup quality.

Columns~(2) to (4) replace application counts with Startup Cartography outcomes, which separate startup formation from startup quality.\footnote{The Startup Formation Rate (SFR) measures realized startup formation. The Entrepreneurial Quality Index (EQI) measures the predicted probability that a startup reaches a high growth outcome such as an IPO or major acquisition based on characteristics observable at or near registration. RECPI measures realized entrepreneurial capital per input.} The SFR coefficient is positive but imprecisely estimated. The EQI coefficient is negative. The RECPI estimate is small and statistically insignificant.

These estimates point to a quantity margin rather than a quality margin. Higher baseline \(RSH_{c,2005}\) predicts more business applications in the matched sample, but the additional entry does not translate into higher average startup quality or more quality adjusted entrepreneurial capital. The negative EQI point estimate is consistent with small scale supplier formation rather than high growth entrepreneurship, although it is not statistically distinguishable from zero.

\subsection{Routine Task Content and Robot Exposure}

The evidence so far links baseline \(RSH\) to later business applications and micro-establishment entry. A natural concern is that \(RSH_{c,2005}\) may be standing in for later robot penetration rather than the interface-compression mechanism highlighted in Section~\ref{section:theory}. This concern is useful because the two interpretations have different empirical footprints. A robot-centered interpretation should be strongest where routine work is manual and production-based. The interface-compression mechanism should be strongest where routine work is cognitive, codifiable, and easier to package into deliverables, vendor relationships, and local business services.

This distinction matters because \(RSH\) is an ex ante measure of task content. It captures whether local employment was concentrated in repetitive and standardized work at baseline. It is not a measure of realized robot adoption. The same task structure can make work easier to implement through machinery in some settings, but it can also make work easier to separate from an incumbent firm and govern through a thinner buyer-supplier interface. We therefore use two checks. First, we split routine work into cognitive and manual components. Second, we condition directly on instrumented robot exposure.

Table~\ref{tab:horse_race_tasks} implements the first check. It replaces \(RSH\) with two endogenous variables, the commuting-zone employment shares in occupations above the 66th percentile of \citet{acemoglu_skills_2011}'s routine cognitive and routine manual task-intensity measures. Each component is instrumented with its own leave-one-state-out shift-share IV, constructed from 1990 industry shares and industry-level routine cognitive and routine manual employment shares.\footnote{The split Bartik IVs are constructed in parallel with the baseline 1990 leave-one-state-out instrument. For each commuting zone \(j\), we compute
\[
iv^{cog}_j = \sum_i E_{i,j,1990}R^{cog}_{i,-s(j),1990}
\quad \text{and} \quad
iv^{man}_j = \sum_i E_{i,j,1990}R^{man}_{i,-s(j),1990}.
\]
Here, \(E_{i,j,1990}\) is industry \(i\)'s share of total employment in commuting zone \(j\) from the 1990 Census. \(R^{cog}_{i,-s(j),1990}\) and \(R^{man}_{i,-s(j),1990}\) are industry \(i\)'s national shares of employment in occupations above the 66th percentile of the DOT \texttt{task\_routine} and \texttt{task\_manual} distributions, computed with workers in commuting zone \(j\)'s state excluded. Industries differ in their cognitive and manual routine intensity, so 1990 industrial composition provides separate predictors for the two margins.} The split instruments have enough power to distinguish the two margins. The Sanderson and Windmeijer statistics are 28.7 and 27.7, and the Kleibergen and Paap weak-identification statistic is 12.6.

The business-application response loads on the cognitive side of routine work. Over 2005 to 2019, a 1 percentage-point higher routine-cognitive share is associated with 43.1 additional business applications per 100{,}000 residents. The corresponding routine-manual coefficient is \(-5.9\) and statistically indistinguishable from zero. The same pattern appears in the subperiods. From 2005 to 2010, the routine-cognitive coefficient is 26.1 and precisely estimated, while the routine-manual coefficient is 6.0 and insignificant. From 2010 to 2019, the routine-cognitive estimate remains positive at 16.9 but becomes imprecise, while the routine-manual estimate is \(-12.0\) and only marginally significant.\footnote{Appendix Table~\ref{tab:alt_treatments} reports one-at-a-time specifications rather than the horse-race specification. The corresponding estimates are positive for the O*NET degree-of-automation and repetition measures, and for routine cognitive task intensity, but negative for routine manual task intensity.} These estimates point to the routine-cognitive margin as the relevant source of the business-entry response. That is the margin closest to interface compression. Administrative support, compliance, staffing, medical coding, logistics coordination, and back-office services are often repetitive enough to describe through deliverables while still tied to local buyers. Routine-manual exposure does not show the same supplier-entry pattern.

Table~\ref{tab_control_robots_china} implements the second check by re-estimating the baseline 2SLS specification while controlling for robot exposure, instrumented using European robot adoption following \citet{acemoglu_robots_2020}. The estimated relationship with \(RSH_{c,2005}\) barely moves. Over 2005 to 2019, the coefficient falls from 27.8 in the baseline specification to 25.3 after robot exposure is included. The subperiod estimates are also close to the baseline estimates. They are 7.8 from 2005 to 2010 and 17.4 from 2010 to 2019, compared with baseline estimates of 8.9 and 19.0.

Robot exposure itself enters positively and significantly, which suggests that realized robot penetration may have its own relationship with business entry. That relationship is separate from the \(RSH\) margin. The interface-compression channel operates through codifiable work, thinner buyer-supplier interfaces, and demand for specialist suppliers. Conditioning on robot exposure does not materially attenuate the coefficient on \(RSH_{c,2005}\), so the baseline relationship is not simply a proxy for realized robot penetration.\footnote{A separate specification adding the China import shock reaches the same conclusion. When the China shock is included as an additional control, the coefficients on \(RSH_{c,2005}\) are 27.7 over 2005 to 2019, 9.4 over 2005 to 2010, and 18.3 over 2010 to 2019. The China-shock coefficients are small and statistically insignificant throughout.}

Taken together, these results support the mechanism's interpretation of \(RSH\). The entry response is tied to routine-cognitive task structure rather than to routine-manual exposure, and it remains after conditioning on realized robot exposure. The evidence, therefore, points toward codifiable supplier entry and boundary redrawing, not a generic robot-exposure explanation.

\subsection{Empirical Evidence for Interface Compression} \label{section:mechanism}

The interface compression mechanism implies an organizational footprint beyond the entry response documented above. Incumbent firms should narrow their production scope as peripheral bundles move outside firm boundaries, and buyer–supplier relationships should shift toward arm's-length governance. We bring two kinds of evidence, establishment-level evidence on local industry breadth and incumbent production specialization (Table~\ref{tab:combined_io_org}) and contract-language evidence on how buyer–supplier relationships are governed (Table~\ref{tab:contract_projection}).

Table~\ref{tab:combined_io_org} provides establishment-level evidence on the two organizational signatures implied by the mechanism. Panel~A uses the change in the number of active four-digit NAICS industries in a commuting zone as a measure of local industry breadth. An industry is active if County Business Patterns records at least one establishment in it. A positive coefficient indicates that higher-exposure commuting zones host a wider range of narrow industries, consistent with specialist suppliers absorbing activities that incumbents shed. Panel~B uses the change in the commuting-zone secondary-production share, defined as the employment-weighted share of local activity in industries producing outside their primary commodity classification, with industry-level off-diagonal shares drawn from the 2007 BEA Make Table. A negative coefficient indicates that production profiles narrow toward primary activities, consistent with incumbents shedding peripheral bundles. Column~(1) reports the baseline specification using \(RSH_{2005}\). Columns~(2) and (3) decompose this baseline into routine cognitive and routine manual exposure, entered separately in columns~(2) and (3) and jointly in the horse-race specification.

Panel~A shows that a one percentage-point increase in \(RSH_{2005}\) is associated with 5.75 additional active four-digit industries in a commuting zone. In Panel~B, the same increase predicts a decline in the secondary-production share. The decomposition then clarifies the source of those baseline relationships. When routine cognitive exposure enters alone, a 1 percentage-point increase is associated with additional active industries and a decline in secondary production. In the horse-race specification in column (4), the routine-cognitive estimates retain their sign, magnitude, and significance once routine manual exposure is included as a competing regressor. Routine manual exposure, by contrast, carries the opposite sign when entered alone and attenuates to a small, statistically insignificant coefficient in the joint specification.

Table~\ref{tab:contract_projection} turns from organizational outcomes in the local economy to the governance mode of buyer--supplier relationships in the industries to which commuting zones are exposed. Using SEC EDGAR Exhibit~10 filings,  we measure industry-level changes in buyer--supplier contract language between 2004--2006 and 2017--2019, and project those shifts to commuting zones using 2005 industry employment shares. SEC filings do not identify the location of the underlying buyer--supplier relationship, so we do not observe local contracting directly. Instead, we treat publicly traded filers as a window into industry-level contracting practices, and ask whether high-\(RSH_{2005}\) commuting zones are disproportionately exposed to industries whose contracting environment shifted in the direction implied by the mechanism. The maintained assumption is not that all firms within an industry write identical contracts, but that shifts in contract language among public filers are informative about broader changes in that industry's contracting environment. The features we examine, namely performance-and-deliverables language, automation-and-technology language, and vendor-management vocabulary, correspond to the thinner, more codified buyer interface that the mechanism predicts. To distinguish this mechanism-relevant shift from generic drafting-style drift, we also examine credit agreements filed by the same public firms as a placebo. Credit agreements are not used to govern outsourcing relationships, so a contract-language shift driven by boilerplate trends should appear there as well.

Columns~(1) and (2) use the change in the share of B2B contracts (service, supply, and license agreements) containing performance-and-deliverables keywords as the projected industry measure. Columns~(3) and (4) use the  analogous change for automation-and-technology keywords. Columns~(5) and (6) use a $z$-score composite of five  projected industry-level measures, i.e., the performance-and-deliverables shift, the automation-and-technology shift, the credit-minus-B2B word-count gap, the tier-2 vendor-management keyword share, and B2B contract volume per industry employee.\footnote{Keywords are identified via regex matches to the full contract text. The performance-and-deliverables set contains \textit{performance-based}, \textit{statement of work}, \textit{service-level agreement}, \textit{service-level}, \textit{SLA}, \textit{deliverable} and its variants, \textit{outcome-based}, \textit{outcomes-based}, \textit{key performance indicator} and its variants, and \textit{KPI}. The automation-and-technology set contains \textit{cloud-based}, \textit{cloud-computing}, \textit{cloud-service}, \textit{cloud-platform}, \textit{SaaS}, \textit{software-as-a-service}, \textit{machine-learning}, \textit{artificial intelligence}, \textit{data-analytics}, \textit{automated}, \textit{automation}, \textit{automatic}, \textit{algorithm} and its variants, \textit{application programming interface}, \textit{API}, \textit{digital platform}, \textit{data-exchange}, and \textit{data integration}. The tier-2 vendor-management set contains \textit{vendor} and its variants, \textit{service provider} and its variants, \textit{managed service} and its variants, \textit{service level}, and \textit{third-party}. A contract is flagged for a set if any of its keywords appear at least once.} Columns~(7) and (8) use changes in credit-agreement word counts as a placebo, because credit agreements are filed by the same firms but do not govern outsourcing relationships. Higher values of the projected contract measure therefore indicate greater exposure to industries whose contracts shifted in the indicated direction over the sample period. Within each column pair, the first column uses baseline \(RSH_{2005}\), instrumented with the paper's 1990 leave-one-state-out Bartik IV. The second column reports the corresponding horse-race specification, instrumenting routine cognitive and routine manual exposure jointly with their split Bartik IVs.

Baseline \(RSH_{2005}\) is positively associated with exposure to industries whose B2B contracts shifted toward performance-and-deliverables language, automation-and-technology language, and the composite. The three mechanism-relevant outcomes all load in the direction implied by interface compression. The credit-agreement placebo is small and statistically indistinguishable from zero, so high-\(RSH\) commuting zones are therefore not exposed to industries in which contract language changed generically across all document types. The horse-race specifications attribute these shifts to routine cognitive employment, while routine manual exposure is small and imprecisely estimated. Two further pieces of evidence support reading these projected shifts as a reorganization of buyer--supplier governance rather than a coincidental change in legal drafting. Within the same SEC filers, B2B contracts become shorter and less segmented relative to credit agreements over the sample window, with the compression most pronounced for service agreements (Appendix Table~\ref{tab:contract_complexity}). Because both document types come from the same firms, this within-filer gap is difficult to attribute to generic drafting trends. \footnote{Panel~C of Appendix Table~\ref{tab:contract_complexity} shows that the number of distinct B2B counterparties per filer rose while the number of B2B contracts filed per firm stayed flat, consistent with lead firms spreading vendor work across a wider set of specialist suppliers rather than deepening relationships with the same counterparties. Appendix Table~\ref{tab:govt_contract_categories} reports an external benchmark from federal procurement. Performance-based contracting rises more in codifiable service categories such as professional and administrative support, IT and telecommunications, and utilities and housekeeping, and rises less in bespoke categories such as construction and architecture and engineering. Because the federal government is not an SEC filer, this pattern cannot be explained by common drafting practices at public firms.}

The two tables converge on the same reorganization from independent data sources. Table~\ref{tab:combined_io_org} shows that local production becomes broader across industries even as it narrows within incumbent firms. Table~\ref{tab:contract_projection} shows that high-\(RSH\) commuting zones are more exposed to industries whose buyer--supplier contracts shift toward language suited to arm's-length, codified governance. In both tables, the effect is driven by routine cognitive employment rather than routine manual, consistent with the mechanism's emphasis on codifiable work near the interface-compression margin. Simple alternatives to the mechanism are difficult to reconcile with this combination of patterns. A generic local demand boom can raise the number of active industries, but has no reason to predict a narrower incumbent production scope. A uniform decline in entry costs can raise entry and industry breadth but has no reason to predict differential exposure to outsourcing-oriented contract language, nor the null credit-agreement placebo. Neither alternative predicts differential loading on routine cognitive rather than routine manual exposure. Boundary redrawing, in which structured, codifiable bundles move outside incumbent firms and are absorbed by local specialist suppliers, fits the pattern combination.

\section{Summary and Conclusion} \label{section:conclusion}

We document that from 2005 to 2019, business applications rose disproportionately in U.S.\ commuting zones with a higher baseline routine employment share. A one-standard-deviation difference in baseline routine employment share is associated with roughly 86 additional business applications per 100{,}000 residents. Realized establishment entry also moves in the same direction, although the response is smaller than the application response. At the same time, the activity shifts toward micro-establishments, and we find no improvement in startup quality. The relationship is driven by routine cognitive work rather than routine manual.

These patterns fit the interface-compression mechanism developed in the paper. When work is structured, codifiable, and easier to verify through deliverables, some auxiliary bundles become easier to govern outside the firm. The relevant margin is not that the underlying task disappears or becomes technologically superior. It is that the buyer can replace part of the internal governance problem with a thinner interface to a specialist supplier. When those outsourced bundles still require proximity to buyers, workers, facilities, or customers, boundary redrawing can generate local supplier entry.

The organizational evidence points in the same direction. Higher routine employment in a commuting zone is associated with broader local industry presence, lower secondary production, and contract language that moves toward performance, deliverables, and codified buyer governance. These facts are difficult to explain with a generic local demand boom alone. A demand boom could raise entry. It would not naturally predict a simultaneous shift toward narrower production profiles and contract language better suited to external procurement.

The paper, therefore, adds a domestic supplier-entry margin to the literature on routine and codifiable work. Prior research shows that routineness can make activities easier to source at arm's length across borders \citep{costinot_adaptation_2011, oldenski_task_2012, oldenski_offshoring_2014}. We show that this is not the only margin. When organizationally separable activities remain place-bound, the same routine-cognitive structure can create demand for domestic specialist suppliers. Routine work can therefore leave the incumbent firm's payroll without leaving the local economy.

This supplier-entry channel also helps explain an aggregate pattern in U.S.\ business dynamism. Business applications increased sharply over our sample period, while the share of applications that became employer firms fell. Boundary redrawing can produce exactly this footprint. It creates demand for more applications, more small firms, and more local suppliers, but not necessarily for better startups or more high-growth firms. The result is a different view of business formation. Some new firms arise not because entrepreneurs discover new products or frontier technologies, but because incumbent firms redraw their boundaries and create markets for specialized suppliers.

\pagebreak
\clearpage

%==============================================================================
% BIBLIOGRAPHY
%==============================================================================

\bibliographystyle{aea}
\bibliography{references}
\clearpage
\section*{Figures}
\input{figures_duha/BA_over_time}
\input{figures_duha/year_by_year_BA_long_dif}
\clearpage

\input{figures_duha/bds_timing_2sls}
%==============================================================================

\clearpage
\section*{Tables}

\input{tables_duha/tab_summary_stats}
\clearpage

\input{tables_duha/tab_baseline_results}

\input{tables_duha/BDS_entry}

\input{tables_duha/table_cbp_size}

\input{tables_duha/table_cartography}

\input{tables_duha/horse_race}

\input{tables_duha/tab_control_robots_china}
\clearpage

\input{tables_duha/table_combined_io_organizational}
\clearpage

\input{tables_duha/contract_projection}

%%%%%%%%%%%%%%%%%%%%%%%%%%%%%%%%%%%%%%%%%%%%%%%%%%%%%%%%%%%%%%%%%%%%%%%%%%%%%%%
% APPENDIX
%%%%%%%%%%%%%%%%%%%%%%%%%%%%%%%%%%%%%%%%%%%%%%%%%%%%%%%%%%%%%%%%%%%%%%%%%%%%%%%

\appendix

\renewcommand{\thetable}{A\arabic{table}}
\setcounter{table}{0}
\renewcommand{\thefigure}{A\arabic{figure}}
\setcounter{figure}{0}
\FloatBarrier
\section{Appendix} \label{appendix}

\input{figures_duha/DEMING_map}

\input{figures_duha/ba_app_map}

\input{figures_duha/leave_one_out}

\input{figures_duha/bds_timing_rf}

\input{figures_duha/bds_timing_5year_2sls}
\clearpage

\input{tables_duha/tab_aes_correlations}
\newpage

\input{tables_duha/tab_robustness}

\input{tables_duha/policy_balance}

\input{tables_duha/tab_rotemberg}

\input{tables_duha/bartik_sensitivity}

\input{tables_duha/tab_pooled_share_iv}
\clearpage

\input{tables_duha/table_alternative_treatments}
\clearpage

\input{tables_duha/table_contract_complexity.tex}

\input{tables_duha/table_govt_contract_categories.tex}

\clearpage

\section{Formal Model of Interface Compression}
\label{appendix:theory_formal}

This appendix presents a reduced-form model of boundary choice and local specialist entry. The model is an organizing device for the empirical comparative static rather than a self-contained theory of firm boundaries. The goal is not to build a full property-rights model with explicit asset ownership, bargaining, or residual control rights. Instead, the setup isolates two margins emphasized in the paper. First, more structured bundles are easier to govern through arm's-length procurement. Second, when outsourced bundles require local delivery, the resulting supplier demand can generate local specialist entry. We allow buyer-side governance technologies to shift the outsourcing cutoff, but the empirical analysis does not identify this technology margin directly. It uses cross-location variation in initial routine task structure as a proxy for the prevalence of structured bundles.

\subsection{Environment}

A buyer \(i\) requires one unit of each auxiliary bundle \(m\in\{1,\ldots,M\}\) to operate its core business. Bundle \(m\) generates \(n_m\ge 1\) underlying buyer-relevant contingencies, such as scheduling margins, access coordination, compliance requirements, and exception handling. Let \(\tau\) index the buyer-side governance environment, with higher \(\tau\) corresponding to greater standardization, monitoring capacity, record-keeping, communication, and ease of codification.

Buyers differ in how structured a given bundle is in their own workflow. Let \(s_{im}\in[0,1]\) denote the \emph{structuredness} of bundle \(m\) for buyer \(i\), where higher \(s_{im}\) means that the bundle can be governed through more standardized and codifiable handoffs. To distinguish organizational separation from geographic relocation, let \(\lambda_m\in[0,1]\) denote the local-delivery intensity of bundle \(m\). When \(\lambda_m=0\), the bundle can be supplied remotely or from abroad without local presence. When \(\lambda_m=1\), the bundle must be supplied locally because delivery depends on proximity to buyers, workers, facilities, or customers. Intermediate values capture bundles for which only part of outsourced demand is local.

Abstracting away from differences in frontier task productivity by treating the direct production cost of executing the underlying task as common across governance modes and normalizing it out, we can model the boundary decision as depending only on governance costs. By construction, heterogeneity in \(s_{im}\) affects the governability of outsourcing rather than direct task productivity or internal governance cost. Nothing in the logic requires this restriction. The same cutoff argument extends more generally so long as the resulting outsourcing surplus remains increasing in \(s_{im}\). If bundle \(m\) is kept in-house, buyer \(i\) incurs internal governance cost
\[
K_m(n_m,\tau),
\]
with
\[
K_n>0,\qquad K_{nn}\ge 0,\qquad K_\tau\le 0.
\]
Thus, internal governance becomes more costly as contingencies proliferate, though improvements in the governance environment may reduce those costs. If the bundle is procured from an outside specialist, the buyer pays supplier price \(p_m\) and incurs a buyer-side interface cost.

Define
\[
d_{im}\equiv \psi(n_m,s_{im},\tau),
\]
where \(d_{im}\) is \emph{interface dimensionality}, i.e., the size of the minimal buyer-facing verifiable performance vector sufficient to govern bundle \(m\) at arm's length. We assume
\[
\psi_n>0,\qquad \psi_s<0,\qquad \psi_\tau<0,\qquad \psi(n,s,\tau)\le n.
\]
The last inequality states that the buyer-facing interface is a compressed representation of the underlying contingency set. The object \(d_{im}\) is not the total amount of information generated by the task. Monitoring, record-keeping, and communication systems may increase the amount of raw observable information while still reducing \(d_{im}\) by compressing that information into a smaller set of contractible summary metrics.

Let
\[
C(d_{im},\tau)
\]
denote the buyer's cost of specifying, monitoring, and administering the outsourcing interface, with
\[
C_d>0,\qquad C_\tau<0.
\]
From the buyer's perspective, \(p_m\) is taken as given. In the entry block below, it is pinned down by supplier operating costs under competitive entry. Allowing \(p_m\) to vary with \(\tau\) would add a supply-side channel, but is not needed for the buyer-side mechanism emphasized here.

\subsection{Buyer's Boundary Choice}

For each bundle \(m\), buyer \(i\) chooses whether to keep the bundle in-house or outsource it. Let
\[
x_{im}\in\{0,1\},
\]
where \(x_{im}=1\) denotes internal provision and \(x_{im}=0\) denotes outsourcing. The buyer solves
\begin{equation}
\min_{\{x_{im}\}_{m=1}^M}
\sum_{m=1}^M
\left[
x_{im}K_m(n_m,\tau)
+
(1-x_{im})\Big(p_m + C(\psi(n_m,s_{im},\tau),\tau)\Big)
\right].
\label{eq:buyer_problem}
\end{equation}
Because the objective is separable across bundles, the boundary choice is bundle by bundle. Relative to the notation in the main text, the buyer-side interface cost is
\[
G_{im}(\tau)\equiv C(\psi(n_m,s_{im},\tau),\tau).
\]

Define outsourcing surplus as
\begin{equation}
S_{im}(s_{im},\tau)
\equiv
K_m(n_m,\tau)
-
\Big[
p_m + C(\psi(n_m,s_{im},\tau),\tau)
\Big].
\label{eq:outsourcing_surplus}
\end{equation}
Buyer \(i\) outsources bundle \(m\) if and only if
\begin{equation}
S_{im}(s_{im},\tau)>0.
\label{eq:outsourcing_rule}
\end{equation}
Equivalently,
\[
x_{im}^*(s_{im},\tau)
=
\mathbf{1}\!\left\{
K_m(n_m,\tau)
\le
p_m + C(\psi(n_m,s_{im},\tau),\tau)
\right\}.
\]

For a generic buyer with structuredness level \(s\), define
\begin{equation}
S_m(s,\tau)
\equiv
K_m(n_m,\tau)
-
\Big[
p_m + C(\psi(n_m,s,\tau),\tau)
\Big].
\label{eq:generic_outsourcing_surplus}
\end{equation}
The buyer-specific surplus in \eqref{eq:outsourcing_surplus} is therefore \(S_{im}(s_{im},\tau)=S_m(s_{im},\tau)\).

\subsection{Comparative Statics in Structuredness and Buyer-Side Governance}

Differentiating \eqref{eq:generic_outsourcing_surplus} yields
\begin{equation}
\frac{\partial S_m(s,\tau)}{\partial s}
=
-
C_d(\psi(n_m,s,\tau),\tau)\,
\psi_s(n_m,s,\tau)
>0,
\label{eq:Ss}
\end{equation}
since \(C_d>0\) and \(\psi_s<0\). That is, more structured bundles are more likely to be outsourced.

Similarly,
\begin{equation}
\frac{\partial S_m(s,\tau)}{\partial \tau}
=
K_\tau(n_m,\tau)
-
C_d(\psi(n_m,s,\tau),\tau)\,
\psi_\tau(n_m,s,\tau)
-
C_\tau(\psi(n_m,s,\tau),\tau).
\label{eq:Stau}
\end{equation}
Improvements in the buyer-side governance environment may lower both internal and external governance costs. Interface compression refers to the case in which the external margin improves more, so that
\begin{equation}
-
C_d(\psi(n_m,s,\tau),\tau)\,
\psi_\tau(n_m,s,\tau)
-
C_\tau(\psi(n_m,s,\tau),\tau)
>
-
K_\tau(n_m,\tau).
\label{eq:relative_improvement_appendix}
\end{equation}
Under \eqref{eq:relative_improvement_appendix}, outsourcing surplus is increasing in \(\tau\). This comparative static is included to formalize the interface-compression logic. The empirical analysis, however, primarily exploits cross-location variation in the initial distribution of structured work rather than direct variation in \(\tau\).

\begin{proposition}[Structuredness cutoff]
\label{prop:structuredness_cutoff}
Fix bundle \(m\). Suppose \(S_m(s,\tau)\) is continuously differentiable in \((s,\tau)\), strictly increasing in \(s\), and satisfies
\[
S_m(0,\tau)<0<S_m(1,\tau)
\]
for the relevant range of \(\tau\). Then there exists a unique cutoff \(s_m^*(\tau)\in(0,1)\) such that buyer \(i\) outsources bundle \(m\) if and only if
\[
s_{im}> s_m^*(\tau).
\]
If, in addition, \eqref{eq:relative_improvement_appendix} holds, then \(s_m^*(\tau)\) is decreasing in \(\tau\), with
\[
\frac{d s_m^*(\tau)}{d\tau}
=
-
\frac{
\partial S_m(s_m^*(\tau),\tau)/\partial \tau
}{
\partial S_m(s_m^*(\tau),\tau)/\partial s
}
<0.
\]
\end{proposition}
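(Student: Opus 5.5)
The argument has two parts: existence and uniqueness of the cutoff for a fixed \(\tau\), obtained from the intermediate value theorem plus strict monotonicity, and then the comparative static in \(\tau\), obtained from the implicit function theorem.

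\emph{Existence and uniqueness.} Fix \(\tau\) in the relevant range. Because \(S_m(\cdot,\tau)\) is continuous on \([0,1]\) and \(S_m(0,\tau)<0<S_m(1,\tau)\), the intermediate value theorem gives some \(s_m^*(\tau)\in(0,1)\) with \(S_m(s_m^*(\tau),\tau)=0\). Strict monotonicity in \(s\) makes this zero unique. The same monotonicity gives \(S_m(s,\tau)>0\) if and only if \(s>s_m^*(\tau)\). By the outsourcing rule \eqref{eq:outsourcing_rule} and the identity \(S_{im}(s_{im},\tau)=S_m(s_{im},\tau)\), buyer \(i\) therefore outsources bundle \(m\) exactly when \(s_{im}>s_m^*(\tau)\).

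A caveat concerns buyers exactly at the cutoff. The indicator \(x^*_{im}\) is written with a weak inequality, so such a buyer keeps the bundle in-house. This matches the strict inequality in the statement, and the tie has measure zero in any case.

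\emph{Comparative static in \(\tau\).} I apply the implicit function theorem to \(F(s,\tau)\equiv S_m(s,\tau)=0\) at \((s_m^*(\tau),\tau)\).

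The main obstacle is that the theorem needs \(\partial S_m/\partial s\neq 0\) at the cutoff. Strict monotonicity alone does not guarantee this, since a strictly increasing function can have a zero derivative at an isolated point. I resolve it by appealing to \eqref{eq:Ss}: under the model's primitive assumptions \(C_d>0\) and \(\psi_s<0\), we have \(\partial S_m/\partial s=-C_d\psi_s>0\) everywhere. The proof should state this explicitly, or else add \(\partial S_m/\partial s>0\) at the cutoff as an assumption.

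With that condition in place, the implicit function theorem gives a unique continuously differentiable local solution. It coincides with \(s_m^*(\tau)\) by the uniqueness shown above, so \(s_m^*\) is \(C^1\) on the relevant range. Differentiating the identity \(S_m(s_m^*(\tau),\tau)\equiv 0\) yields
\[
\frac{d s_m^*(\tau)}{d\tau}=-\frac{\partial S_m(s_m^*(\tau),\tau)/\partial\tau}{\partial S_m(s_m^*(\tau),\tau)/\partial s}.
\]

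\emph{Sign of the derivative.} The denominator is positive by \eqref{eq:Ss}. For the numerator, rearranging \eqref{eq:relative_improvement_appendix} and substituting into \eqref{eq:Stau} shows that \(\partial S_m/\partial\tau=K_\tau-C_d\psi_\tau-C_\tau>0\) at every \((s,\tau)\), and in particular at the cutoff. Hence \(ds_m^*/d\tau<0\). Because this holds at every \(\tau\) in the (interval) range, the mean value theorem gives that \(s_m^*\) is strictly decreasing on that range.
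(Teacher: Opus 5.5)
Your proposal is correct and follows essentially the same route as the paper. It uses the intermediate value theorem plus strict monotonicity for a unique interior cutoff, applies the outsourcing rule with ties resolved toward internal provision, and then uses the implicit function theorem, signing the numerator via \eqref{eq:Stau} and \eqref{eq:relative_improvement_appendix} and the denominator via \eqref{eq:Ss}. You also note explicitly that the implicit function theorem needs \(\partial S_m/\partial s\neq 0\) at the cutoff, which \eqref{eq:Ss} supplies but strict monotonicity alone does not. The paper's proof relies on this point only implicitly.
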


\begin{proof}
Equation \eqref{eq:Ss} implies that \(S_m(s,\tau)\) is strictly increasing in \(s\). Continuity and the sign change between \(s=0\) and \(s=1\) imply the existence of a unique interior cutoff \(s_m^*(\tau)\) satisfying
\[
S_m(s_m^*(\tau),\tau)=0.
\]
The outsourcing rule \eqref{eq:outsourcing_rule} then implies that buyer \(i\) outsources bundle \(m\) if and only if \(s_{im}> s_m^*(\tau)\). At the cutoff itself, the buyer is indifferent. Under the convention in \eqref{eq:outsourcing_rule}, equality does not count as outsourcing. Under \eqref{eq:relative_improvement_appendix}, equation \eqref{eq:Stau} implies \(\partial S_m(s,\tau)/\partial \tau>0\). Applying the implicit function theorem to \(S_m(s_m^*(\tau),\tau)=0\) yields
\[
\frac{d s_m^*(\tau)}{d\tau}
=
-
\frac{
\partial S_m(s_m^*(\tau),\tau)/\partial \tau
}{
\partial S_m(s_m^*(\tau),\tau)/\partial s
}.
\]
Since both numerator and denominator are positive, the derivative is negative.
\end{proof}

The sign-change condition rules out corner cases in which a bundle is always internal or always outsourced over the relevant range. Those corner cases are allowed in the broader framework, but the interior cutoff is the empirically relevant case for bundles near the make-or-buy margin.

\subsection{Specialist Entry}

To connect the boundary choice to entrepreneurship, consider a location \(\ell\) and normalize the mass of potential buyers requiring bundle \(m\) to one. With a larger buyer base, the expressions below scale proportionally. Let \(H_{\ell m}(s)\) denote the cumulative distribution function of structuredness across those buyers. The mass of buyers that outsource bundle \(m\) in location \(\ell\) is
\begin{equation}
D_{m\ell}(\tau)
=
\int \mathbf{1}\{s>s_m^*(\tau)\}\, dH_{\ell m}(s)
=
1-H_{\ell m}(s_m^*(\tau)).
\label{eq:outsourced_demand}
\end{equation}

This expression captures outsourced demand generated by buyers in location \(\ell\). It does not by itself imply that the supplier is also located in \(\ell\). Local supplier demand depends on the bundle's local-delivery intensity \(\lambda_m\). Define local outsourced demand as
\begin{equation}
D^{L}_{m\ell}(\tau)
=
\lambda_m D_{m\ell}(\tau)
=
\lambda_m\left[1-H_{\ell m}(s_m^*(\tau))\right].
\label{eq:local_outsourced_demand}
\end{equation}
When \(\lambda_m=0\), outsourcing may be supplied remotely or from abroad and generates no local entry prediction. When \(\lambda_m>0\), part of the boundary shift creates demand for nearby suppliers.

Now introduce a minimal local supply side. A specialist establishment serving bundle \(m\) incurs fixed operating cost \(F_m\), constant net operating cost \(c_m\) per client, and can serve \(q_m>0\) local buyers. Under competitive entry and full utilization,
\[
p_m = c_m + \frac{F_m}{q_m},
\]
so the supplier price in the buyer problem covers operating costs. Given local outsourced demand \(D^{L}_{m\ell}(\tau)\), the equilibrium mass of specialist establishments in location \(\ell\) is
\begin{equation}
E_{m\ell}(\tau)=\frac{D^{L}_{m\ell}(\tau)}{q_m}.
\label{eq:entry}
\end{equation}

\begin{proposition}[Local entry response]
\label{prop:entry}
If location \(\ell\) has a first-order stochastically higher distribution of structuredness than location \(\ell'\), then
\[
D^{L}_{m\ell}(\tau)\ge D^{L}_{m\ell'}(\tau)
\qquad\text{and}\qquad
E_{m\ell}(\tau)\ge E_{m\ell'}(\tau)
\]
for every \(\tau\), provided the two locations have the same \(\lambda_m\) and \(q_m\). If, in addition, \(s_m^*(\tau)\) is decreasing in \(\tau\) and \(\lambda_m>0\), then local outsourced demand \(D^{L}_{m\ell}(\tau)\) and local specialist entry \(E_{m\ell}(\tau)\) are weakly increasing in \(\tau\).
\[
D^{L}_{m\ell}(\tau)\ge D^{L}_{m\ell'}(\tau)
\qquad\text{and}\qquad
E_{m\ell}(\tau)\ge E_{m\ell'}(\tau)
\]
for every \(\tau\), provided the two locations have the same \(\lambda_m\) and \(q_m\).
\end{proposition}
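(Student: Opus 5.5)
The argument uses the closed forms \eqref{eq:local_outsourced_demand} and \eqref{eq:entry}, together with the cutoff characterization in Proposition~\ref{prop:structuredness_cutoff}. The key point is that \(s_m^*(\tau)\) depends only on bundle primitives \((n_m,p_m,\tau)\) and the functions \(K_m\), \(C\), \(\psi\). It does not depend on the location. Hence both locations face the same cutoff at a given \(\tau\), and the location enters only through \(H_{\ell m}\).

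\emph{Cross-location comparison.} Fix \(\tau\) and write \(s^*=s_m^*(\tau)\). First-order stochastic dominance of \(H_{\ell m}\) over \(H_{\ell' m}\) means \(H_{\ell m}(s)\le H_{\ell' m}(s)\) for every \(s\in[0,1]\). Evaluating this at \(s=s^*\) gives \(1-H_{\ell m}(s^*)\ge 1-H_{\ell' m}(s^*)\), that is, \(D_{m\ell}(\tau)\ge D_{m\ell'}(\tau)\). Because \(\lambda_m\ge 0\) is common to the two locations, multiplying preserves the inequality, so \(D^L_{m\ell}(\tau)\ge D^L_{m\ell'}(\tau)\). Because \(q_m>0\) is also common, dividing by \(q_m\) in \eqref{eq:entry} gives \(E_{m\ell}(\tau)\ge E_{m\ell'}(\tau)\).

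\emph{Monotonicity in \(\tau\).} Suppose \(s_m^*(\tau)\) is decreasing in \(\tau\), for example under \eqref{eq:relative_improvement_appendix} by Proposition~\ref{prop:structuredness_cutoff}. Take \(\tau_1<\tau_2\); then \(s_m^*(\tau_2)\le s_m^*(\tau_1)\). A CDF is nondecreasing, so \(H_{\ell m}(s_m^*(\tau_2))\le H_{\ell m}(s_m^*(\tau_1))\), and therefore \(D_{m\ell}(\tau_2)\ge D_{m\ell}(\tau_1)\). Multiplying by \(\lambda_m>0\) and dividing by \(q_m>0\) carries this inequality over to \(D^L_{m\ell}\) and \(E_{m\ell}\), which are therefore weakly increasing in \(\tau\). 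The inequality is only weak because \(H_{\ell m}\) may be flat over the interval swept out by the cutoff.

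The repeated display at the end of the statement simply restates the cross-location claim and needs no separate argument. The only point that requires care is the convention at the cutoff. Outsourcing is defined by the strict inequality \(s>s^*\), so the outsourcing mass equals \(1-H(s^*)\) even when \(H\) has an atom at \(s^*\). This holds because a right-continuous CDF satisfies \(H(s^*)=P(s\le s^*)\), so the integral in \eqref{eq:outsourced_demand} is correct as written. With that checked, every step reduces to the monotonicity of CDFs and multiplication by nonnegative constants.
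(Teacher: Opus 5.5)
Your proof is correct and follows the same route as the paper's. The cross-location inequality comes from first-order stochastic dominance evaluated at the common cutoff \(s_m^*(\tau)\), and the monotonicity in \(\tau\) comes from the upper tail \(1-H_{\ell m}(s_m^*(\tau))\) weakly expanding as the cutoff falls; \eqref{eq:local_outsourced_demand} and \eqref{eq:entry} then carry both results to \(D^L_{m\ell}\) and \(E_{m\ell}\), and you additionally make explicit two points the paper's terse proof leaves implicit: that the cutoff does not depend on location, and that \(1-H(s^*)\) correctly counts the strict outsourcing set even when there is an atom at \(s^*\).
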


\begin{proof}
Equation \eqref{eq:local_outsourced_demand} shows that local demand is the outsourced upper tail of the structuredness distribution multiplied by the local-delivery intensity \(\lambda_m\). When the cutoff falls, the upper tail weakly expands, so \(D^{L}_{m\ell}(\tau)\) weakly rises whenever \(\lambda_m>0\). Equation \eqref{eq:entry} then implies the same result for \(E_{m\ell}(\tau)\). The cross-location comparison follows from first-order stochastic dominance.
\end{proof}

This entry block is intentionally minimal. Its purpose is not to explain which individual entrepreneur enters, but to show how a buyer-side boundary shift can create market demand for local specialist suppliers when outsourced bundles remain place-bound. In the empirical analysis, \(RSH\) is treated as a reduced-form location-level proxy for places in which more buyer-bundle pairs lie near, and on the outsourceable side of, this structuredness margin.

\subsection{Implications for Incumbent Scope and Industry Reallocation}

Let buyer \(i\)'s within-firm auxiliary scope be
\begin{equation}
W_i(\tau)=\sum_{m=1}^M x_{im}^*(s_{im},\tau),
\label{eq:scope}
\end{equation}
the number of auxiliary bundles still performed inside the firm. Since \(x_{im}^*(s_{im},\tau)\) switches from one to zero when the falling cutoff \(s_m^*(\tau)\) passes below the fixed structuredness level \(s_{im}\), Proposition~\ref{prop:structuredness_cutoff} implies that \(W_i(\tau)\) is weakly decreasing in \(\tau\).

At the location level, let
\[
N_\ell(\tau)=\sum_{m=1}^M \mathbf{1}\{E_{m\ell}(\tau)>0\}
\]
denote the number of active local specialist bundle-industries. Proposition~\ref{prop:entry} implies that \(N_\ell(\tau)\) is weakly increasing whenever at least one additional bundle is both outsourced and locally delivered.

The model therefore predicts organizational reallocation in two complementary senses. Within incumbent firms, auxiliary production becomes narrower as more peripheral bundles are shifted outside firm boundaries. Across the economy, production becomes broader as more specialist establishments become viable. These comparative statics provide the rationale for interpreting lower secondary production and greater activity in narrow specialist industries as evidence consistent with boundary redrawing.

\subsection{Interpretation}

The mechanism does not require specialist entrants to possess a frontier technological advantage in the underlying task. Their advantage may be organizational. They absorb recruiting, screening, scheduling, payroll, compliance, local coordination, and exception handling into a vendor relationship, thereby reducing the buyer's governance burden. For that reason, the model mainly predicts expansion on the extensive margin, with more specialist establishments serving outsourced bundles. It does not, by itself, imply that average startup quality must rise. In the paper, we therefore treat the absence of startup-quality effects as a qualitative implication of the mechanism rather than as a separately identified theorem.

\subsection{Relation to the Task-Trade Framework}

The structuredness parameter \(s\) in this appendix plays a role analogous to the routineness index \(\mu(t)\) in \citet{costinot_adaptation_2011}. In both frameworks, higher routineness makes arm's-length governance relatively more attractive, because more routine activities are easier to specify and verify ex ante and raise fewer ex post adaptation problems. The central difference is the margin we study. The task-trade literature focuses on the organization of cross-border production and explains why routine activities are more likely to be sourced at arm's length abroad. Our framework uses the same governance logic but adds a local supplier-entry margin. When outsourced bundles require local delivery, the movement of activity outside incumbent firm boundaries can generate domestic specialist entry rather than only foreign sourcing.

The object \(d_{im}=\psi(n_m,s_{im},\tau)\), the dimensionality of the buyer-facing verifiable performance vector, captures the interface-compression channel. It allows the buyer-facing contract to be a compressed representation of the underlying contingency set. The bundle structure, indexed by \(m\) and defined by the \(n_m\) underlying contingencies, also differs from task-trade setups that treat each task as a separable unit of production. These additions are useful for the empirical application because the observed margin is not the intrafirm share of cross-border trade, but the local footprint of domestic supplier formation.

\end{document}

%% file: figures_duha/BA_over_time.tex
\begin{figure}[!htbp]
\caption{Population-Adjusted Business Applications Over Time, 2005-2019}
\label{fig:BA_over_time}
\begin{center}
\includegraphics[width=0.7\textwidth]{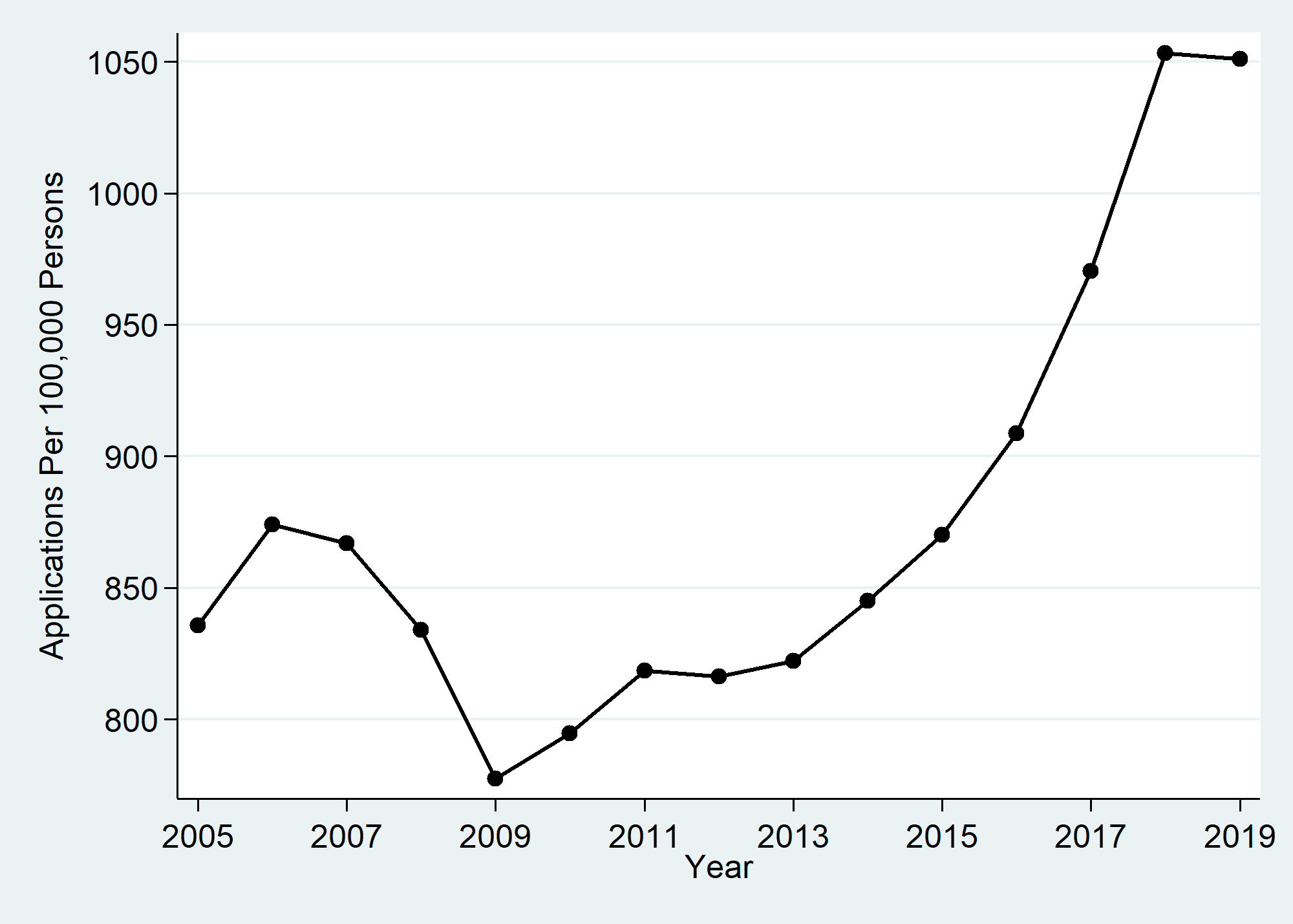}
\end{center}
\scriptsize{\parbox{1\linewidth}{The figure presents the population-weighted annual mean of business applications per 100,000 persons across commuting zones for each year from 2005 to 2019. We weight each commuting zone by its share of the US population in 2000. The sample consists of 722 commuting zones over a 15-year period, resulting in 10,830 commuting zone-year observations.}}
\vspace{10pt}
\end{figure}

%% file: figures_duha/year_by_year_BA_long_dif.tex
\begin{figure}[htbp]
\centering
\caption{Year-by-Year IV Estimates of the Effect of RSH on Business Applications\label{fig_year_by_year_BA_long_dif}}
\includegraphics[width=0.85\textwidth]{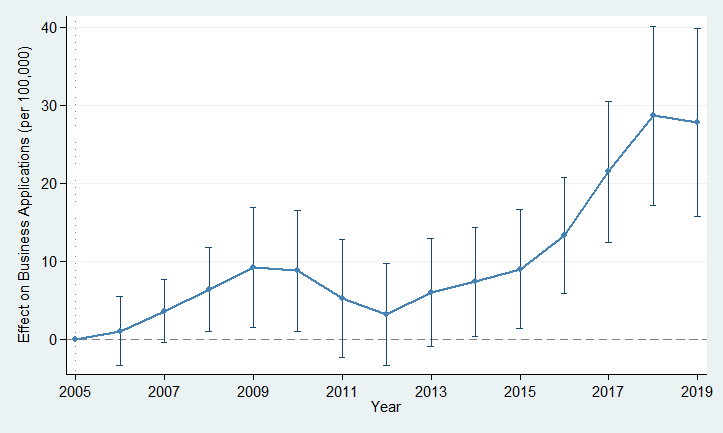}
\begin{minipage}{\textwidth}
\scriptsize
This figure plots 2SLS estimates of the effect of 2005 Routine Employment Share (RSH) on the cumulative change in business applications (per 100{,}000 residents) from 2005 to each subsequent year. Each point corresponds to a separate regression of the change in business applications between year $t$ and 2005 on $RSH_{2005}$, instrumented with the 1990 shift-share instrument. The 2005 coefficient is normalized to zero by construction. Vertical bars show 95\% confidence intervals based on standard errors clustered by state. All specifications include the same baseline controls as in Table~\ref{tab_baseline_results}, state fixed effects, and commuting-zone population weights.
\end{minipage}
\end{figure}

%% file: figures_duha/bds_timing_2sls.tex
\begin{figure}[htbp]
\centering
\caption{Timing Evidence from BDS Establishment Entry\label{fig:bds_timing_annual_2sls}}
\includegraphics[width=0.82\textwidth]{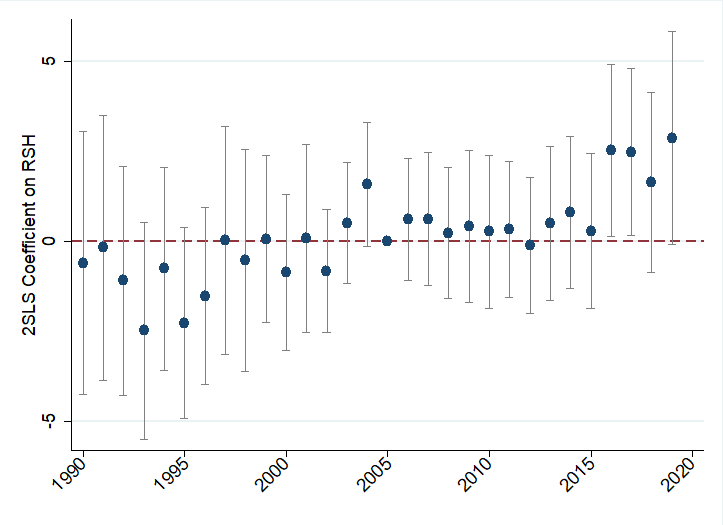}
\par\vspace{6pt}
\begin{minipage}{\textwidth}
\scriptsize
This figure plots annual 2SLS estimates of the relationship between baseline Routine Employment Share ($RSH_{c,2005}$) and cumulative changes in BDS establishment entry per 100{,}000 residents relative to 2005. For years after 2005, the outcome is the change in BDS establishment entry between year $t$ and 2005; for years before 2005, the outcome is the backward cumulative difference between year $t$ and 2005. Each point corresponds to a separate regression using the same baseline controls, state fixed effects, and commuting-zone population weights as in Table~\ref{tab_baseline_results}. $RSH_{c,2005}$ is instrumented with the 1990 shift-share instrument, and vertical bars show 95\% confidence intervals based on standard errors clustered by state. The 2005 coefficient is normalized to zero by construction. Because the pre-2005 points are backward cumulative differences relative to 2005 rather than formal event-study leads, we interpret them as descriptive timing evidence rather than a definitive pretrend test.
\end{minipage}
\end{figure}

%% file: tables_duha/tab_summary_stats.tex
\begin{table}[htbp]
\centering
\caption{Summary Statistics\label{tab:summary_stats}}
\begin{tabular}{p{5.2cm} p{7.0cm} cc}
\hline\hline
Variable & Definition & Mean & Std.\ Dev. \\
\hline
\\[-6pt]
RSH$_{2005}$ &
Routine Employment Share. Employment share (percentage points) in occupations in the top tercile of \cite{deming_growing_2017} routine task intensity in 2005 &
31.4 & 3.1 \\
[6pt]
$\Delta$ Business Applications &
Change in business applications per 100{,}000 residents &
& \\
\hspace{0.8cm}2005--2019 & & 145.8 & 599.1 \\
\hspace{0.8cm}2005--2010 & & $-$52.7 & 167.7 \\
\hspace{0.8cm}2010--2019 & & 198.5 & 577.4 \\
[6pt]
$\Delta$ BDS Establishment Entry &
Change in BDS establishment entries per 100{,}000 residents &
& \\
\hspace{0.8cm}2005--2019 & & $-$39.5 & 59.0 \\
\hspace{0.8cm}2005--2010 & & $-$47.0 & 50.1 \\
\hspace{0.8cm}2010--2019 & & 7.5 & 53.5 \\
[6pt]
$\Delta$ SFR &
Change in startup formation rate between 2005 and 2016 &
264.3 & 1129.2 \\
[6pt]
$\Delta$ EQI &
Change in entrepreneurship quality index between 2005 and 2016 ($\times 100$) &
$-$0.01 & 0.02 \\
[6pt]
$\Delta$ RECPI &
Change in realized entrepreneurial capital per input between 2005 and 2016 ($\times 100$) &
14.4 & 132.6 \\
[6pt]
\hline\hline
\end{tabular}
\par\vspace{6pt}
\begin{minipage}{\textwidth}
\scriptsize
The unit of observation is a commuting zone. RSH, BFS, and BDS summary statistics are based on 722 commuting zones. Startup Cartography outcomes are available for a smaller matched sample of 680 commuting zones.
\end{minipage}
\end{table}

%% file: tables_duha/tab_baseline_results.tex
\begin{table}[htbp]
\centering
\caption{Baseline Results\label{tab_baseline_results}}
\begin{tabular}{lcccc}
\hline\hline
 & (1) & (2) & (3) & (4) \\
 & RSH$_{2005}$ & $\Delta$ Bus.\ App.s & $\Delta$ Bus.\ App.s & $\Delta$ Bus.\ App.s \\
 & (First Stage) & 2005--19 & 2005--10 & 2010--19 \\
\hline
IV & 0.943$^{***}$ & & & \\
 & (0.104) & & & \\
[6pt]
RSH$_{2005}$ & & 27.8$^{***}$ & 8.9$^{**}$ & 19.0$^{***}$ \\
 & & (6.2) & (4.0) & (6.9) \\
\hline
Controls & Yes & Yes & Yes & Yes \\
State FE & Yes & Yes & Yes & Yes \\
Observations & 722 & 722 & 722 & 722 \\
\hline\hline
\end{tabular}
\par\vspace{6pt}
\begin{minipage}{\textwidth}
\scriptsize
Column (1) reports the first-stage regression; columns (2)--(4) report 2SLS estimates. The dependent variable is the change in business applications per 100,000 residents over the indicated period. The excluded instrument is the 1990 shift-share instrument. All specifications include state fixed effects and commuting-zone demographic controls. Robust standard errors clustered by state are in parentheses. Observations are weighted by commuting-zone population share. The Kleibergen--Paap $F$-statistic is 81.8. $^{***}$ $p<0.01$, $^{**}$ $p<0.05$, $^{*}$ $p<0.1$.
\end{minipage}
\end{table}

%% file: tables_duha/BDS_entry.tex
\begin{table}[htbp]
\centering
\caption{Routine Employment Share and Realized Establishment Entry (BDS)\label{tab_BDS}}
\begin{tabular}{l*{3}{c}}
\hline\hline
 & (1) & (2) & (3) \\
 & $\Delta$ Entries & $\Delta$ Entries & $\Delta$ Entries \\
 & 2005--19 & 2005--10 & 2010--19 \\
\hline
\\[-6pt]
RSH$_{2005}$ & 2.87$^{*}$ & 0.27 & 2.60$^{**}$ \\
 & (1.51) & (1.08) & (1.21) \\
[6pt]
\hline
Controls & Yes & Yes & Yes \\
State FE & Yes & Yes & Yes \\
Observations & 722 & 722 & 722 \\
\hline\hline
\end{tabular}
\par\vspace{6pt}
\begin{minipage}{\textwidth}
\scriptsize
2SLS estimates. The dependent variable is the change in total establishment entries per 100{,}000 residents from the Business Dynamics Statistics (BDS) over the indicated period. Kleibergen--Paap $F$-statistic: 81.8. All specifications include state fixed effects and commuting-zone demographic controls. Robust standard errors clustered by state in parentheses. Observations weighted by commuting-zone population share. $^{***}$ $p<0.01$, $^{**}$ $p<0.05$, $^{*}$ $p<0.1$.
\end{minipage}
\end{table}

%% file: tables_duha/table_cbp_size.tex
\begin{table}[htbp]
\centering
\caption{Routine Employment Share and Changes in Establishment Size Shares\label{tab_size_shares}}
\begin{tabular}{l*{4}{c}}
\hline\hline
 & (1) & (2) & (3) & (4) \\
 & 1--4 & 5--19 & 20--49 & 50+ \\
\hline
\\[-6pt]
\multicolumn{5}{c}{\textit{Panel A: 2005--2019}} \\
\hline
RSH$_{2005}$ & 0.197$^{**}$ & $-$0.111$^{**}$ & $-$0.037 & $-$0.050$^{**}$ \\
 & (0.077) & (0.046) & (0.030) & (0.022) \\
\hline
\\[-6pt]
\multicolumn{5}{c}{\textit{Panel B: 2005--2010}} \\
\hline
RSH$_{2005}$ & 0.039 & 0.014 & 0.000 & $-$0.053$^{***}$ \\
 & (0.047) & (0.031) & (0.016) & (0.015) \\
\hline
\\[-6pt]
\multicolumn{5}{c}{\textit{Panel C: 2010--2019}} \\
\hline
RSH$_{2005}$ & 0.159$^{***}$ & $-$0.125$^{***}$ & $-$0.037 & 0.003 \\
 & (0.052) & (0.041) & (0.023) & (0.016) \\
\hline
Controls & Yes & Yes & Yes & Yes \\
State FE & Yes & Yes & Yes & Yes \\
Observations & 722 & 722 & 722 & 722 \\
\hline\hline
\end{tabular}
\par\vspace{6pt}
\begin{minipage}{\textwidth}
\scriptsize
2SLS estimates using the same instrument and baseline controls as in Table~\ref{tab_baseline_results}. The dependent variables are long differences in the share of establishments in each size category, multiplied by 100, so coefficients are percentage-point changes in establishment shares associated with a 1 percentage-point increase in $RSH_{2005}$. The size bins are 1--4, 5--19, 20--49, and 50+ employees. The 50+ category is constructed residually from County Business Patterns (CBP) total establishment counts. All specifications include state fixed effects and baseline commuting-zone controls. Standard errors clustered by state in parentheses. Observations weighted by commuting-zone population share. Because the four size shares sum to one, coefficients within a panel sum to approximately zero, up to rounding. Kleibergen--Paap $F$-statistic: 81.8. $^{***}$ $p<0.01$, $^{**}$ $p<0.05$, $^{*}$ $p<0.1$.
\end{minipage}
\end{table}

%% file: tables_duha/table_cartography.tex
\begin{table}[htbp]
\centering
\caption{Routine Employment Share and Entrepreneurial Quality in the Cartography Sample\label{tab:cartography}}
\begin{tabular}{l*{4}{c}}
\hline\hline
 & (1) & (2) & (3) & (4) \\
 & $\Delta$ BFS Apps & $\Delta$ SFR & $\Delta$ EQI & $\Delta$ RECPI \\
\hline
\\[-6pt]
RSH$_{2005}$ & 14.1$^{***}$ & 293.6 & $-$0.0000182$^{**}$ & 0.339 \\
 & (3.93) & (224.3) & (0.0000092) & (0.253) \\
[6pt]
\hline
Controls & Yes & Yes & Yes & Yes \\
State FE & Yes & Yes & Yes & Yes \\
Observations & 680 & 680 & 680 & 680 \\
\hline\hline
\end{tabular}
\par\vspace{6pt}
\begin{minipage}{\textwidth}
\scriptsize
2SLS estimates. Column~(1) re-estimates the BFS business-application specification on the matched sample of 680 commuting zones covered by the Startup Cartography data, with the dependent variable equal to the change in business applications per 100{,}000 residents between 2005 and 2016. Columns~(2)--(4) report estimates for changes between 2005 and 2016 in the Startup Formation Rate (SFR), the Entrepreneurial Quality Index (EQI), and realized entrepreneurial capital per input (RECPI). All specifications include state fixed effects and baseline commuting-zone controls. Standard errors clustered by state in parentheses. Observations weighted by commuting-zone population share. The first stage is common across columns. Kleibergen--Paap $F$-statistic: 93.9. $^{***}$ $p<0.01$, $^{**}$ $p<0.05$, $^{*}$ $p<0.1$.
\end{minipage}
\end{table}

%% file: tables_duha/horse_race.tex
\begin{table}[htbp]
\centering
\caption{Routine Cognitive versus Routine Manual Exposure\label{tab:horse_race_tasks}}
\begin{tabular}{lccc}
\hline\hline
 & (1) & (2) & (3) \\
 & $\Delta$ Bus.\ App.s & $\Delta$ Bus.\ App.s & $\Delta$ Bus.\ App.s \\
 & 2005--19 & 2005--10 & 2010--19 \\
\hline
\\[-6pt]
Routine Cognitive & 43.1$^{**}$ & 26.1$^{***}$ & 16.9 \\
 & (16.8) & (6.2) & (16.4) \\
[6pt]
Routine Manual & $-$5.9 & 6.0 & $-$12.0$^{*}$ \\
 & (8.7) & (5.0) & (6.5) \\
\hline
Controls & Yes & Yes & Yes \\
State FE & Yes & Yes & Yes \\
Observations & 722 & 722 & 722 \\
\hline\hline
\end{tabular}
\par\vspace{6pt}
\begin{minipage}{\textwidth}
\scriptsize
2SLS estimates with two endogenous regressors: routine cognitive task intensity and routine manual task intensity. Each is instrumented using a separate leave-one-state-out shift-share instrument constructed analogously to the baseline IV from 1990 industry shares and industry-level routine cognitive/manual employment shares. All specifications include state fixed effects and baseline commuting-zone controls, use commuting-zone population weights, and report standard errors clustered by state. Sanderson--Windmeijer first-stage $F$-statistics for the two endogenous regressors are 28.7 and 27.7; the Kleibergen--Paap weak-identification statistic is 12.6. $^{***}$ $p<0.01$, $^{**}$ $p<0.05$, $^{*}$ $p<0.1$.
\end{minipage}
\end{table}

%% file: tables_duha/tab_control_robots_china.tex
\begin{table}[htbp]
\centering
\caption{Robustness to Controlling for Robot Exposure\label{tab_control_robots_china}}
\begin{tabular}{l*{3}{c}}
\hline\hline
 & (1) & (2) & (3) \\
 & $\Delta$ Bus.\ App.s & $\Delta$ Bus.\ App.s & $\Delta$ Bus.\ App.s \\
 & 2005--19 & 2005--10 & 2010--19 \\
\hline
RSH$_{2005}$ & 25.3$^{***}$ & 7.8$^{**}$ & 17.4$^{**}$ \\
 & (6.1) & (3.7) & (7.1) \\
[6pt]
Robot Exposure & 13.9$^{***}$ & 5.4$^{**}$ & 8.5$^{**}$ \\
 & (4.1) & (2.7) & (3.4) \\
\hline
Controls & Yes & Yes & Yes \\
State FE & Yes & Yes & Yes \\
Observations & 722 & 722 & 722 \\
\hline\hline
\end{tabular}
\par\vspace{6pt}
\begin{minipage}{\textwidth}
\scriptsize
2SLS estimates. Each column re-estimates the baseline specification while controlling for robot exposure. RSH$_{2005}$ is instrumented using the 1990 shift-share instrument, and robot exposure is instrumented using European robot adoption following \citet{acemoglu_robots_2020}. All specifications include state fixed effects and baseline commuting-zone controls. Standard errors clustered by state in parentheses. Observations weighted by commuting-zone population share. The Sanderson--Windmeijer first-stage $F$-statistic for RSH$_{2005}$ is 95.4; for robot exposure, 1156.3. $^{***}$ $p<0.01$, $^{**}$ $p<0.05$, $^{*}$ $p<0.1$.
\end{minipage}
\end{table}

%% file: tables_duha/table_combined_io_organizational.tex
\begin{table}[htbp]
\centering
\caption{Codifiability Decomposition of Organizational Breadth and Production Structure\label{tab:combined_io_org}}
\begin{tabular}{lcccc}
\hline\hline
\\[-6pt]
\multicolumn{5}{c}{\textit{Panel A: $\Delta$ Active 4-digit NAICS codes}} \\
\hline
 & (1) & (2) & (3) & (4) \\
\hline
RSH$_{2005}$ & 5.75$^{***}$ & & & \\
 & (0.86) & & & \\
[6pt]
Routine Cognitive & & 9.51$^{***}$ & & 7.46$^{***}$ \\
 & & (1.55) & & (1.81) \\
[6pt]
Routine Manual & & & $-$4.17$^{***}$ & $-$1.59 \\
 & & & (0.58) & (1.12) \\
\hline
\\[-6pt]
\multicolumn{5}{c}{\textit{Panel B: $\Delta$ Secondary production share}} \\
\hline
 & (1) & (2) & (3) & (4) \\
\hline
RSH$_{2005}$ & $-$0.0020$^{***}$ & & & \\
 & (0.0004) & & & \\
[6pt]
Routine Cognitive & & $-$0.0031$^{***}$ & & $-$0.0038$^{***}$ \\
 & & (0.0006) & & (0.0007) \\
[6pt]
Routine Manual & & & 0.0008$^{***}$ & $-$0.0006 \\
 & & & (0.0003) & (0.0004) \\
\hline
Controls & Yes & Yes & Yes & Yes \\
State FE & Yes & Yes & Yes & Yes \\
Observations & 720 & 720 & 720 & 720 \\
\hline\hline
\end{tabular}
\par\vspace{6pt}
\begin{minipage}{\textwidth}
\scriptsize
2SLS long-difference estimates (2005--2019). Panel~A counts 4-digit NAICS industries with at least one establishment in County Business Patterns; a positive coefficient indicates that higher-exposure commuting zones gained more active industry categories, consistent with organizational broadening. Panel~B measures the employment-weighted share of CZ employment in industries producing outside their primary commodity classification (Make Table off-diagonal share, BEA 2007 I-O Tables, Before Redefinitions, Detail Level); a negative coefficient indicates a shift toward more specialized production, consistent with firms shedding peripheral activities. Column~(1) instruments RSH with the baseline shift-share IV. Columns~(2) and (3) instrument routine cognitive and routine manual employment shares separately with split leave-one-state-out Bartik IVs. Column~(4) treats both as endogenous. All specifications include state fixed effects, baseline controls, and population-share weights. Standard errors clustered by state. Kleibergen--Paap $F$-statistics: col.~(1), 81.8; col.~(2), 110.0; col.~(3), 56.7; col.~(4), 12.6. $^{***}p<0.01$, $^{**}p<0.05$, $^{*}p<0.10$.
\end{minipage}
\end{table}

%% file: tables_duha/contract_projection.tex
\begin{landscape}
\begin{table}[htbp]
\centering
\caption{Projected Exposure to Contract Compression and Routine Employment Share\label{tab:contract_projection}}
\begin{tabular}{lcccccccc}
\hline\hline
 & \multicolumn{2}{c}{Performance/Deliverables} & \multicolumn{2}{c}{Automation/Technology} & \multicolumn{2}{c}{Composite Index} & \multicolumn{2}{c}{Placebo (Credit)} \\
\cmidrule(lr){2-3} \cmidrule(lr){4-5} \cmidrule(lr){6-7} \cmidrule(lr){8-9}
 & (1) & (2) & (3) & (4) & (5) & (6) & (7) & (8) \\
\hline
\\[-6pt]
RSH$_{2005}$ & 0.00122$^{***}$ & & 0.00174$^{**}$ & & 0.01285$^{**}$ & & 0.00075 & \\
 & (0.00029) & & (0.00068) & & (0.00615) & & (0.00067) & \\
[6pt]
Routine Cognitive & & 0.00163$^{**}$ & & 0.00239$^{*}$ & & 0.03216$^{***}$ & & 0.00179 \\
 & & (0.00079) & & (0.00132) & & (0.01002) & & (0.00173) \\
[6pt]
Routine Manual & & 0.00010 & & $-$0.00003 & & 0.00395 & & $-$0.00031 \\
 & & (0.00038) & & (0.00064) & & (0.00776) & & (0.00088) \\
\hline
Controls & Yes & Yes & Yes & Yes & Yes & Yes & Yes & Yes \\
State FE & Yes & Yes & Yes & Yes & Yes & Yes & Yes & Yes \\
Observations & 722 & 722 & 722 & 722 & 722 & 722 & 722 & 722 \\
\hline\hline
\end{tabular}
\par\vspace{6pt}
\begin{minipage}{\linewidth}
\scriptsize
2SLS estimates of commuting-zone-level projections of industry-level changes in contract language on baseline routine task exposure. The first column of each pair instruments Routine Employment Share with the 1990 leave-one-state-out Bartik IV used throughout the paper. The second column of each pair jointly instruments routine cognitive and routine manual exposure with their respective split Bartik IVs. The contract-language shifts are measured from SEC EDGAR Exhibit~10 filings between 2004--2006 and 2017--2019 and projected to commuting zones using 2005 industry employment shares (approximately three-digit NAICS). Performance/Deliverables uses performance-and-deliverables keywords. Automation/Technology uses automation-and-technology keywords. Composite Index is a z-score composite of these two outcomes, the credit-minus-B2B word-count gap, the tier-2 vendor-management keyword share, and B2B contract volume per industry employee. The placebo is built from credit-agreement word counts. All specifications include state fixed effects, baseline controls, and commuting-zone population weights. Standard errors clustered by state in parentheses. Kleibergen--Paap $F$-statistic is 81.8 for the RSH columns and 12.6 for the horse-race columns. $^{***}p<0.01$, $^{**}p<0.05$, $^{*}p<0.10$.
\end{minipage}
\end{table}
\end{landscape}

%% file: figures_duha/DEMING_map.tex
\begin{figure}[htbp]
\centering
\caption{Geographic Variation in Routine Employment Share, 2005\label{fig:aes_map}}
\includegraphics[width=\textwidth]{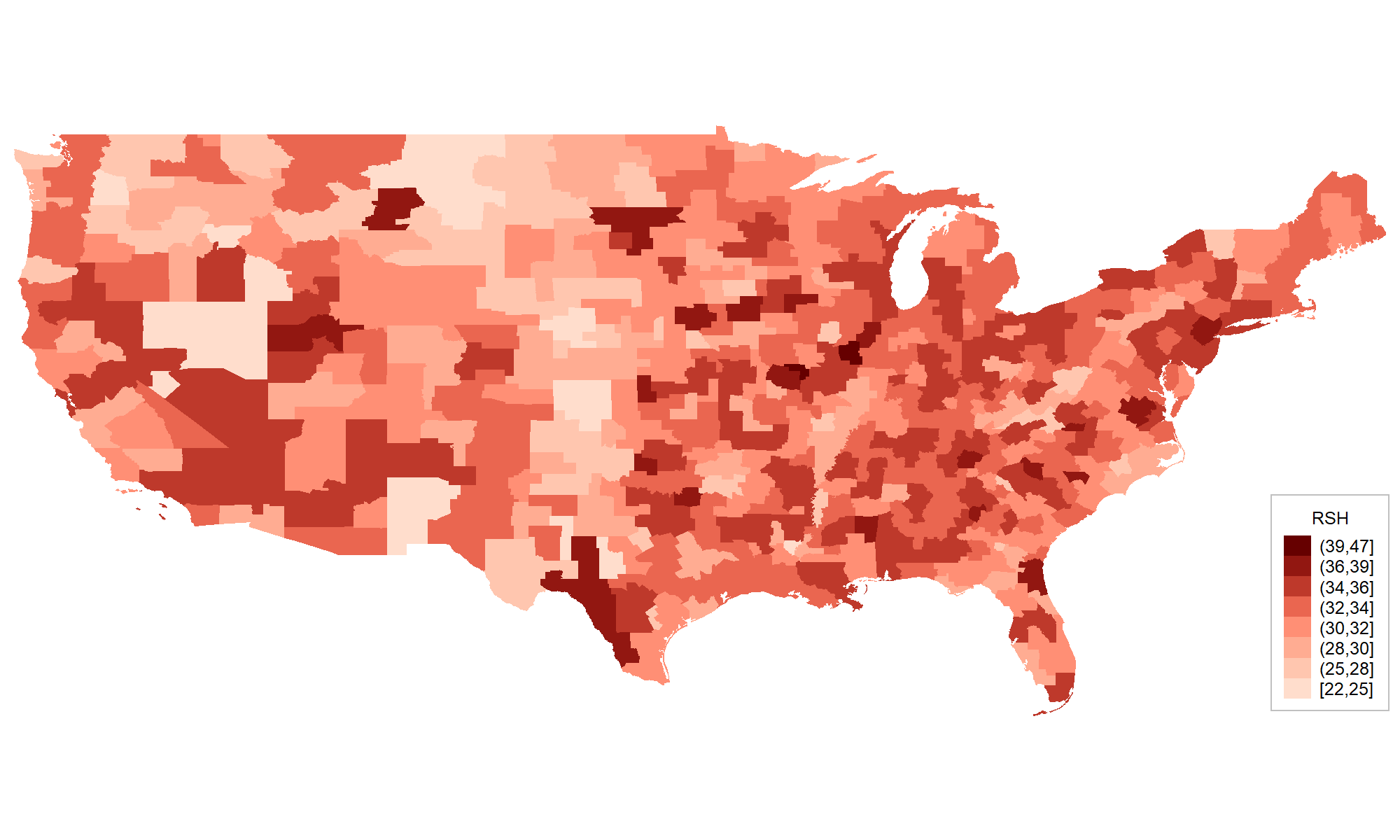}
\begin{minipage}{\textwidth}
\scriptsize
This figure maps Routine Employment Share (RSH) across U.S. commuting zones in 2005. RSH is defined as the employment share (percentage points) in occupations in the top tercile of the \citet{deming_growing_2017} routine task intensity distribution. Darker shades indicate greater exposure.
\end{minipage}
\end{figure}

%% file: figures_duha/ba_app_map.tex
\begin{figure}[htbp]
\centering
\caption{Geographic Variation in Business Application Growth, 2005--2019}
\label{fig:busapps_change_map}
\includegraphics[width=\textwidth]{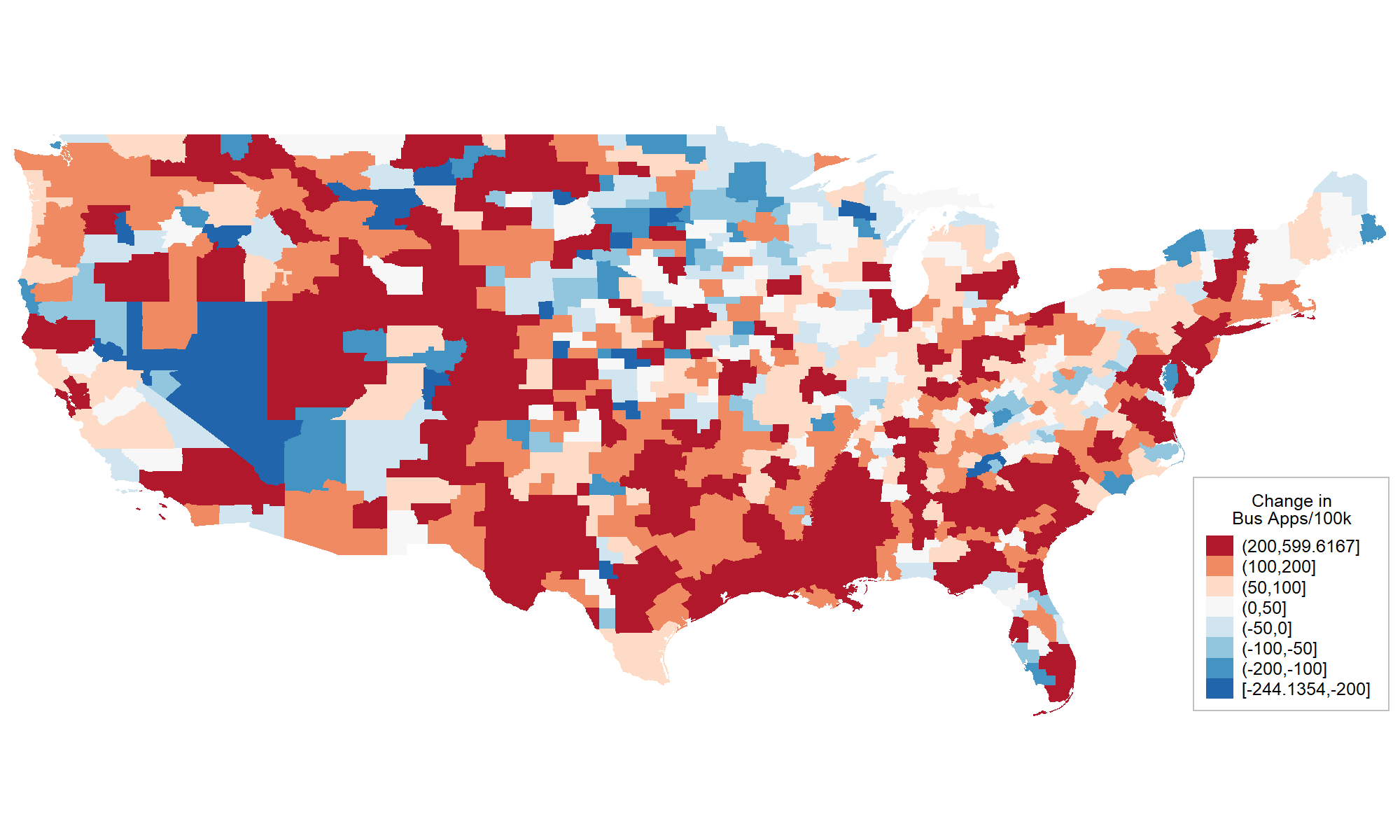}
\begin{minipage}{0.95\textwidth}
\small
\textit{Notes:} This figure maps the change in business applications per 100{,}000 residents across U.S. commuting zones between 2005 and 2019. Red shades indicate larger increases in business applications, while blue shades indicate declines.
\end{minipage}
\end{figure}

%% file: figures_duha/leave_one_out.tex
\begin{figure}[htbp]
\centering
\caption{Leave-One-State-Out IV Estimates\label{fig:leave_one_out}}
\includegraphics[width=0.85\textwidth]{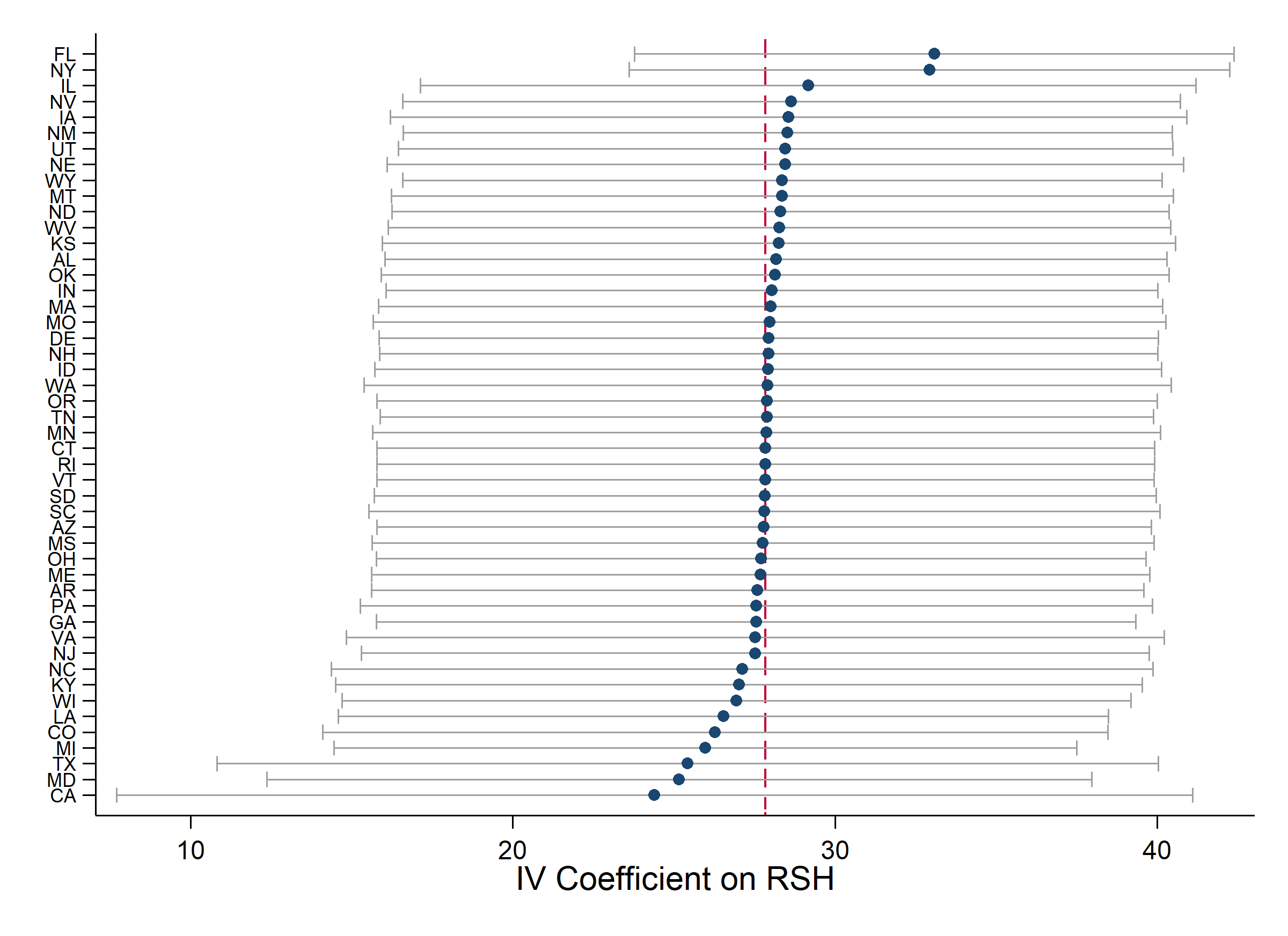}
\par\vspace{6pt}
\begin{minipage}{\textwidth}
\scriptsize
Each point reports the 2SLS coefficient on RSH from the baseline specification after excluding all commuting zones in the indicated state. Horizontal lines show 95\% confidence intervals based on state-clustered standard errors. The dashed vertical line marks the full-sample baseline estimate. All specifications include state fixed effects, baseline CZ controls, and CZ population weights.
\end{minipage}
\end{figure}

%% file: figures_duha/bds_timing_rf.tex
\begin{figure}[htbp]
\centering
\caption{Reduced-Form Timing Evidence from BDS Establishment Entry\label{fig:bds_timing_annual_rf}}
\includegraphics[width=0.82\textwidth]{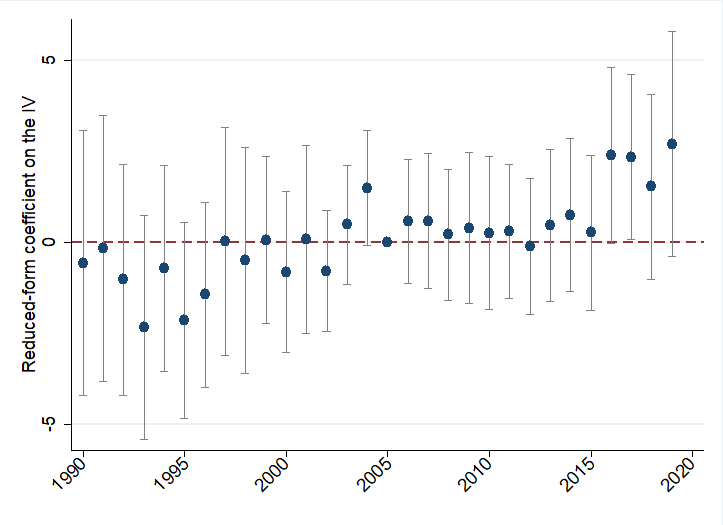}
\par\vspace{6pt}
\begin{minipage}{\textwidth}
\scriptsize
This figure plots annual reduced-form estimates of the relationship between the 1990 shift-share instrument and cumulative changes in BDS establishment entry per 100{,}000 residents relative to 2005. For years after 2005, the outcome is the change in BDS establishment entry between year $t$ and 2005; for years before 2005, the outcome is the backward cumulative difference between year $t$ and 2005. Each point corresponds to a separate regression using the same baseline controls, state fixed effects, and commuting-zone population weights as in Table~\ref{tab_baseline_results}. Vertical bars show 95\% confidence intervals based on standard errors clustered by state. The 2005 coefficient is normalized to zero by construction. As in Figure~\ref{fig:bds_timing_annual_2sls}, the pre-2005 points are descriptive timing evidence rather than formal event-study leads.
\end{minipage}
\end{figure}

%% file: figures_duha/bds_timing_5year_2sls.tex
\begin{figure}[htbp]
\centering
\caption{Five-Year Timing Estimates for BDS Establishment Entry\label{fig:bds_timing_5yr_2sls}}
\includegraphics[width=0.78\textwidth]{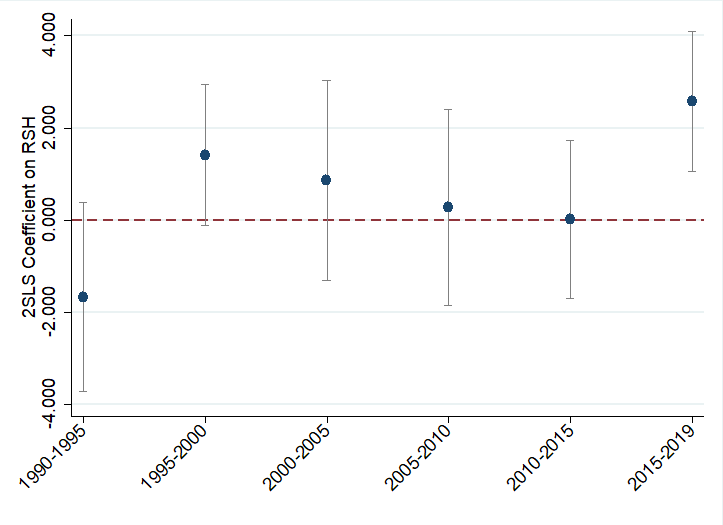}
\par\vspace{6pt}
\begin{minipage}{\textwidth}
\scriptsize
This figure plots 2SLS estimates of the relationship between baseline Routine Employment Share ($RSH_{c,2005}$) and non-overlapping five-year changes in BDS establishment entry per 100{,}000 residents. The estimation windows are 1990--1995, 1995--2000, 2000--2005, 2005--2010, 2010--2015, and 2015--2019. Each point corresponds to a separate regression using the same baseline controls, state fixed effects, and commuting-zone population weights as in Table~\ref{tab_baseline_results}. $RSH_{c,2005}$ is instrumented with the 1990 shift-share instrument, and vertical bars show 95\% confidence intervals based on standard errors clustered by state. Because these exercises rely on a different outcome measure than the baseline BFS regressions and apply common baseline controls across windows, we interpret them as supportive timing and placebo evidence rather than definitive pretrend tests.
\end{minipage}
\end{figure}

%% file: tables_duha/tab_aes_correlations.tex
\begin{table}[htbp]
\centering
\caption{Correlation of RSH with Other Metrics\label{tab:aes_correlations}}
\begin{tabular}{p{9.0cm} c}
\hline\hline
Task Measure & Correlation with RSH \\
\hline
\\[-6pt]
Degree of Job Automation$^{c}$ & 0.593 \\
Routine Cognitive$^{b}$ & 0.708 \\
Importance of Repeating$^{c}$ & 0.854 \\
Cognitive Tasks$^{a}$ & 0.316 \\
Social Tasks$^{a}$ & 0.168 \\
Non-Routine Cognitive (Analytical)$^{b}$ & 0.171 \\
Non-Routine Cognitive (Interpersonal)$^{b}$ & 0.032 \\
Routine Manual$^{b}$ & $-$0.294 \\[6pt]
\hline
Observations & 722 \\
\hline\hline
\end{tabular}
\par\vspace{6pt}
\begin{minipage}{\textwidth}
\scriptsize
Pearson correlations between RSH (Routine Employment Share) and alternative task-based employment share measures at the commuting zone level in 2005. $^{a}$~\cite{deming_growing_2017}. \quad $^{b}$~\cite{acemoglu_skills_2011}. \quad $^{c}$~O*NET Work Context.
\end{minipage}
\end{table}

%% file: tables_duha/tab_robustness.tex
{\setlength{\tabcolsep}{4pt}
\begin{center}
\begin{longtable}{@{}l*{4}{>{\centering\arraybackslash}p{0.15\linewidth}}@{}}
\caption{Robustness Tests\label{tab:robustness}} \\
\hline\hline
\endfirsthead
\multicolumn{5}{l}{\textit{Table \ref{tab:robustness} (continued)}} \\
\hline
\endhead
\hline
\endfoot
\hline
\multicolumn{5}{p{0.97\linewidth}}{\scriptsize Col.\ (1) reports first-stage coefficients in the IV panels; cols.\ (2)--(4) report 2SLS estimates, except Panel~B, which reports OLS. The dependent variable is the change in business applications per 100{,}000 residents (log change in Panel~D). Panel~A uses the 1980 instrument; Panel~C uses quartile controls; Panel~E additionally controls for a leave-one-out Bartik measure of Great Recession exposure, constructed from 2005 industry employment shares and 2007--2010 national industry employment growth; Panel~F varies the RSH cutoff. Panel~G reports the baseline point estimates with alternative inference procedures: heteroskedastic-robust standard errors, census-division clustering with Webb wild-bootstrap $p$-values, and spatial standard errors using a 100-mile cutoff. Unless otherwise noted, all specifications include state fixed effects and baseline CZ controls, use CZ population weights, and report state-clustered standard errors. Panel~B uses 720 observations. Other panels use 722 observations, except the spatial estimates in Panel~G, which use 721 because one CZ lacks centroid coordinates. Kleibergen--Paap $F$-statistics are: Panel~A, 91.7; Panel~C, 101.8; Panel~D, 81.8; Panel~E, 78.7; Panel~F, 85.3 (50th), 73.3 (75th), 163.2 (90th). $^{***}$ $p<0.01$, $^{**}$ $p<0.05$, $^{*}$ $p<0.1$.} \\
\endlastfoot
\\[-6pt]
\multicolumn{5}{c}{\textit{Panel A: 1980 Instrument}} \\*
\hline
 & (1) & (2) & (3) & (4) \\*
 & RSH$_{2005}$ & $\Delta$ Bus.\ App.s & $\Delta$ Bus.\ App.s & $\Delta$ Bus.\ App.s \\*
 & (First Stage) & 2005--19 & 2005--10 & 2010--19 \\*
\hline
IV & 0.709$^{***}$ & & & \\*
 & (0.074) & & & \\*
[6pt]
RSH$_{2005}$ & & 34.2$^{***}$ & 7.2$^{*}$ & 27.0$^{***}$ \\*
 & & (6.7) & (4.0) & (6.9) \\*
\hline
Observations & 722 & 722 & 722 & 722 \\
\hline
\\[-6pt]
\multicolumn{5}{c}{\textit{Panel B: OLS}} \\*
\hline
 & (1) & (2) & (3) &  \\*
 & $\Delta$ Bus.\ App.s & $\Delta$ Bus.\ App.s & $\Delta$ Bus.\ App.s &  \\*
 & 2005--19 & 2005--10 & 2010--19 &  \\*
\hline
RSH$_{2005}$ & 15.0$^{***}$ & 2.8 & 12.2$^{***}$ &  \\*
 & (4.6) & (2.3) & (3.8) &  \\*
\hline
Observations & 720 & 720 & 720 &  \\
\hline
\\[-6pt]
\multicolumn{5}{c}{\textit{Panel C: Quartile Controls}} \\*
\hline
 & (1) & (2) & (3) & (4) \\*
 & RSH$_{2005}$ & $\Delta$ Bus.\ App.s & $\Delta$ Bus.\ App.s & $\Delta$ Bus.\ App.s \\*
 & (First Stage) & 2005--19 & 2005--10 & 2010--19 \\*
\hline
IV & 0.661$^{***}$ & & & \\*
 & (0.065) & & & \\*
[6pt]
RSH$_{2005}$ & & 48.3$^{***}$ & 26.9$^{***}$ & 21.4$^{***}$ \\*
 & & (10.0) & (7.6) & (5.2) \\*
\hline
Observations & 722 & 722 & 722 & 722 \\
\newpage
\\[-6pt]
\multicolumn{5}{c}{\textit{Panel D: Log Outcome}} \\*
\hline
 & (1) & (2) & (3) & (4) \\*
 & RSH$_{2005}$ & $\Delta \log$ Bus.\ App.s & $\Delta \log$ Bus.\ App.s & $\Delta \log$ Bus.\ App.s \\*
 & (First Stage) & 2005--19 & 2005--10 & 2010--19 \\*
\hline
IV & 0.943$^{***}$ & & & \\*
 & (0.104) & & & \\*
[6pt]
RSH$_{2005}$ & & 0.029$^{***}$ & 0.010$^{**}$ & 0.019$^{***}$ \\*
 & & (0.005) & (0.004) & (0.004) \\*
\hline
Observations & 722 & 722 & 722 & 722 \\
\hline
\\[-6pt]
\multicolumn{5}{c}{\textit{Panel E: Recession Exposure Control}} \\*
\hline
 & (1) & (2) & (3) & (4) \\*
 & RSH$_{2005}$ & $\Delta$ Bus.\ App.s & $\Delta$ Bus.\ App.s & $\Delta$ Bus.\ App.s \\*
 & (First Stage) & 2005--19 & 2005--10 & 2010--19 \\*
\hline
IV & 0.946$^{***}$ & & & \\*
 & (0.107) & & & \\*
[6pt]
RSH$_{2005}$ & & 28.4$^{***}$ & 9.2$^{**}$ & 19.2$^{***}$ \\*
 & & (5.9) & (3.9) & (6.7) \\*
\hline
Observations & 722 & 722 & 722 & 722 \\
\newpage
\\[-6pt]
\multicolumn{5}{c}{\textit{Panel F: Alternative RSH Thresholds}} \\*
\hline
 & (1) & (2) & (3) & (4) \\*
 & RSH & $\Delta$ Bus.\ App.s & $\Delta$ Bus.\ App.s & $\Delta$ Bus.\ App.s \\*
 & (First Stage) & 2005--19 & 2005--10 & 2010--19 \\*
\hline
\multicolumn{5}{l}{\textit{50th percentile}} \\*[2pt]
IV & 0.830$^{***}$ & & & \\*
 & (0.090) & & & \\*
[6pt]
RSH$^{p50}$ & & 31.7$^{***}$ & 10.1$^{**}$ & 21.6$^{***}$ \\*
 & & (7.4) & (4.4) & (8.2) \\*
[6pt]
\hline
\multicolumn{5}{l}{\textit{75th percentile}} \\*[2pt]
IV & 0.860$^{***}$ & & & \\*
 & (0.100) & & & \\*
[6pt]
RSH$^{p75}$ & & 30.6$^{***}$ & 9.7$^{**}$ & 20.8$^{***}$ \\*
 & & (7.3) & (4.5) & (7.8) \\*
[6pt]
\hline
\multicolumn{5}{l}{\textit{90th percentile}} \\*[2pt]
IV & 0.458$^{***}$ & & & \\*
 & (0.036) & & & \\*
[6pt]
RSH$^{p90}$ & & 57.3$^{***}$ & 18.2$^{**}$ & 39.1$^{**}$ \\*
 & & (14.7) & (7.6) & (16.2) \\*
\hline
Observations & 722 & 722 & 722 & 722 \\
\newpage
\\[-6pt]
\multicolumn{5}{c}{\textit{Panel G: Alternative Standard Errors}} \\*
\hline
 & (1) & (2) & (3) & (4) \\*
 & RSH$_{2005}$ & $\Delta$ Bus.\ App.s & $\Delta$ Bus.\ App.s & $\Delta$ Bus.\ App.s \\*
 & (First Stage) & 2005--19 & 2005--10 & 2010--19 \\*
\hline
Point estimate & 0.943 & 27.8 & 8.9 & 19.0 \\*
[3pt]
Robust SE & (0.081) & (6.239) & (3.477) & (5.311) \\*
$p$-value & [0.000] & [0.000] & [0.011] & [0.000] \\*
[6pt]
Census-division cluster SE & (0.124) & (4.931) & (4.809) & (7.640) \\*
Wild-bootstrap $p$-value & --- & [0.001] & [0.122] & [0.057] \\*
[6pt]
Spatial SE (100-mile cutoff) & --- & (6.717) & (3.776) & (5.614) \\*
$p$-value & --- & [0.000] & [0.019] & [0.001] \\
\hline
\end{longtable}
\end{center}
}

%% file: tables_duha/policy_balance.tex
\begin{table}[htbp]
\centering
\caption{Balance Tests: Instruments and Policy Changes (2005--2019)\label{table_policy_balance}}
\begin{tabular}{lcccc}
\hline\hline
 & (1) & (2) & (3) & (4) \\
 & $\Delta$ RTW & $\Delta$ Corp.\ Tax & $\Delta$ Min.\ Wage & $\Delta$ NCA \\
\hline
\\[-6pt]
IV & 0.026 & 0.032 & 0.012 & 0.001 \\
 & (0.016) & (0.221) & (0.115) & (0.002) \\
[6pt]
\hline
Observations & 722 & 722 & 722 & 722 \\
\hline\hline
\end{tabular}
\par\vspace{6pt}
\begin{minipage}{\textwidth}
\scriptsize
Each column regresses a 2005--2019 state policy change on the 1990 routine share instrument. RTW is right-to-work status; Corp.\ Tax is the state corporate tax rate; Min.\ Wage is the effective minimum wage; NCA is non-compete agreement enforceability. Null coefficients indicate the instrument does not predict state policy changes, supporting the exclusion restriction. All specifications include baseline commuting-zone controls and are weighted by commuting-zone population share. Standard errors clustered by state in parentheses. $^{***}$ $p<0.01$, $^{**}$ $p<0.05$, $^{*}$ $p<0.1$.
\end{minipage}
\end{table}

%% file: tables_duha/tab_rotemberg.tex
\begin{table}[htbp]
\centering
\caption{Top Rotemberg-Weight Industries\label{tab:rotemberg_top10}}
\small
\begin{tabular}{c l c c c c c}
\hline\hline
 & & & & Share of & Just-ID & First stage \\
Rank & Industry & $g_k$ & $\hat{\alpha}_k$ & $|\hat{\alpha}_k|$ (\%) & $\hat{\beta}_k$ & $\hat{\pi}_k$ \\
\hline
1 & Insurance & 0.404 & 0.2774 & 16.0 & 32.57 & 105.10 \\
2 & Agr.: crops & $-$0.277 & 0.1559 & 9.0 & 10.27 & $-$52.49 \\
3 & Mining: coal & $-$0.258 & 0.1049 & 6.0 & 17.33 & $-$31.81 \\
4 & Construction & $-$0.250 & 0.0967 & 5.6 & 70.52 & $-$41.95 \\
5 & Agr.: livestock & $-$0.279 & 0.0738 & 4.3 & 5.02 & $-$120.76 \\
6 & Banking & 0.397 & 0.0651 & 3.8 & 23.32 & 83.48 \\
7 & Mfg: knitting & 0.235 & 0.0432 & 2.5 & 41.37 & 83.37 \\
8 & Legal services & 0.585 & 0.0427 & 2.5 & 42.87 & 157.82 \\
9 & Security brokers & 0.336 & 0.0412 & 2.4 & 0.54 & 128.24 \\
10 & Hospitals & $-$0.128 & $-$0.0407 & 2.3 & 31.03 & 53.22 \\
\hline
\end{tabular}
\par\vspace{6pt}
\begin{minipage}{\textwidth}
\scriptsize
This table reports the 10 industries with the largest absolute Rotemberg weights for the common-shock approximation to the baseline shift-share instrument. ``Share of $|\hat{\alpha}_k|$'' normalizes absolute weights by total absolute Rotemberg mass, $\sum_k |\hat{\alpha}_k| = 1.7351$. The top 1, top 5, and top 10 industries account for 16.0, 40.8, and 54.3 percent of total absolute Rotemberg mass, respectively. In the CZ-level diagnostics file, the common-shock approximation is correlated 0.9994 with the exact leave-one-state-out instrument.
\end{minipage}
\end{table}

%% file: tables_duha/bartik_sensitivity.tex
\begin{table}[htbp]
\centering
\caption{Sensitivity of the Baseline 2SLS Estimate to Top-$k$ and Drop-$k$ Sector Instruments\label{tab:bartik_exact_sensitivity}}
\begin{tabular}{lccc}
\hline\hline
 & Coefficient & & \\
Variant & on RSH & s.e. & KP $F$ \\
\hline
Baseline & 27.8$^{***}$ & (6.2) & 81.8 \\
\hline
\multicolumn{4}{c}{\textit{Panel A: Instruments built using only the top-$k$ industries}} \\
\hline
$k=1$ & 32.6$^{***}$ & (7.6) & 55.6 \\
$k=2$ & 34.4$^{***}$ & (8.2) & 57.6 \\
$k=3$ & 35.8$^{***}$ & (9.1) & 50.5 \\
$k=4$ & 32.6$^{***}$ & (9.8) & 50.4 \\
$k=5$ & 33.8$^{***}$ & (10.4) & 49.4 \\
$k=6$ & 31.3$^{***}$ & (9.7) & 63.9 \\
$k=7$ & 33.0$^{***}$ & (9.0) & 66.6 \\
$k=8$ & 34.0$^{***}$ & (9.5) & 62.4 \\
$k=9$ & 30.3$^{**}$ & (11.9) & 42.4 \\
$k=10$ & 30.4$^{***}$ & (11.1) & 44.0 \\
\hline
\multicolumn{4}{c}{\textit{Panel B: Instruments built after dropping the top-$k$ industries}} \\
\hline
$k=1$ & 22.8$^{*}$ & (12.6) & 18.9 \\
$k=2$ & 21.8$^{*}$ & (11.9) & 21.6 \\
$k=3$ & 21.5$^{*}$ & (11.7) & 24.5 \\
$k=4$ & 24.6$^{**}$ & (10.5) & 26.9 \\
$k=5$ & 24.0$^{**}$ & (10.4) & 28.0 \\
$k=6$ & 24.2$^{**}$ & (11.2) & 22.3 \\
$k=7$ & 19.4 & (13.3) & 16.8 \\
$k=8$ & 14.3 & (15.0) & 18.6 \\
$k=9$ & 19.5 & (23.4) & 9.9 \\
$k=10$ & 14.6 & (32.0) & 3.9 \\
\hline
\end{tabular}
\par\vspace{6pt}
\begin{minipage}{\textwidth}
\scriptsize
The dependent variable is the change in business applications per 100,000 residents from 2005 to 2019. The baseline row uses the exact leave-one-state-out instrument. Panel~A rebuilds the exact instrument using only the top-$k$ industries ranked by absolute Rotemberg weight. Panel~B rebuilds the exact instrument after excluding the top-$k$ industries. All specifications include state fixed effects and baseline controls, use CZ population weights, and report state-clustered standard errors. KP $F$ denotes the Kleibergen--Paap rk Wald $F$-statistic. $^{***}p<0.01$, $^{**}p<0.05$, $^{*}p<0.1$.
\end{minipage}
\end{table}

%% file: tables_duha/tab_pooled_share_iv.tex
\begin{table}[htbp]
\centering
\caption{Pooled Share-IV Estimates Using Top Rotemberg-Weight Industries\label{tab:bartik_pooled_iv}}
\begin{tabular}{llcccc}
\hline\hline
 & & Coefficient & & & Hansen $J$ \\
Shares used & Estimator & on RSH (s.e.) & Joint $F$ & KP $F$ & $p$-value \\
\hline
Top 5 & 2SLS & 26.9$^{***}$ (4.8) & 106.2 & 99.3 & 0.089 \\
Top 5 & LIML & 27.1$^{***}$ (4.9) & 106.2 & 99.3 & 0.088 \\
Top 10 & 2SLS & 28.0$^{***}$ (4.7) & 72.8 & 68.0 & 0.187 \\
Top 10 & LIML & 28.5$^{***}$ (4.8) & 72.8 & 68.0 & 0.186 \\
\hline
\end{tabular}
\par\vspace{6pt}
\begin{minipage}{\textwidth}
\scriptsize
The dependent variable is the change in business applications per 100,000 residents from 2005 to 2019. ``Top 5'' and ``Top 10'' refer to standardized 1990 CZ employment shares for the industries with the five and ten largest absolute Rotemberg weights. ``Joint $F$'' reports the absorbed first-stage joint significance test from the regression of RSH on the listed shares. KP $F$ reports the Kleibergen--Paap rk Wald $F$-statistic. Hansen $J$ $p$-values come from the overidentification test. All specifications include state fixed effects and baseline controls, use CZ population weights, and report state-clustered standard errors. $^{***}p<0.01$, $^{**}p<0.05$, $^{*}p<0.1$.
\end{minipage}
\end{table}

%% file: tables_duha/table_alternative_treatments.tex
\begin{table}[htbp]
\centering
\caption{Alternative Task-Based Treatment Measures\label{tab:alt_treatments}}
{\setlength{\tabcolsep}{4pt}
\begin{tabular}{@{}l*{4}{c}@{}}
\hline\hline
 & (1) & (2) & (3) & (4) \\
 & First Stage & $\Delta$ Bus.\ App.s & $\Delta$ Bus.\ App.s & $\Delta$ Bus.\ App.s \\
 &  & 2005--19 & 2005--10 & 2010--19 \\
\hline
\\[-6pt]
\multicolumn{5}{l}{\textit{O*NET Degree of Automation}} \\[2pt]
IV & 0.932$^{***}$ & & & \\
 & (0.108) & & & \\
[6pt]
Degree of Automation & & 28.2$^{***}$ & 9.0$^{**}$ & 19.2$^{**}$ \\
 & & (7.9) & (4.0) & (8.1) \\
[6pt]
\hline
\multicolumn{5}{l}{\textit{O*NET Importance of Repetition}} \\[2pt]
IV & 0.851$^{***}$ & & & \\
 & (0.075) & & & \\
[6pt]
Importance of Repetition & & 30.9$^{***}$ & 9.8$^{**}$ & 21.1$^{***}$ \\
 & & (7.2) & (4.2) & (8.2) \\
[6pt]
\hline
\multicolumn{5}{l}{\textit{Routine Cognitive Task Intensity}} \\[2pt]
IV & 0.522$^{***}$ & & & \\
 & (0.062) & & & \\
[6pt]
Routine Cognitive & & 50.3$^{***}$ & 16.0$^{**}$ & 34.3$^{***}$ \\
 & & (11.8) & (7.0) & (13.1) \\
[6pt]
\hline
\multicolumn{5}{l}{\textit{Routine Manual Task Intensity}} \\[2pt]
IV & $-$0.945$^{***}$ & & & \\
 & (0.185) & & & \\
[6pt]
Routine Manual & & $-$27.8$^{***}$ & $-$8.8$^{***}$ & $-$19.0$^{**}$ \\
 & & (8.8) & (3.0) & (9.4) \\
\hline
Controls & Yes & Yes & Yes & Yes \\
State FE & Yes & Yes & Yes & Yes \\
Observations & 722 & 722 & 722 & 722 \\
\hline\hline
\end{tabular}}
\par\vspace{6pt}
\begin{minipage}{\textwidth}
\scriptsize
2SLS estimates. Column~(1) reports first-stage coefficients; columns~(2)--(4) report second-stage estimates. The dependent variable is the change in business applications per 100{,}000 residents. Each panel uses a different task-based treatment measure instrumented with the corresponding 1990 shift-share instrument. All specifications include state fixed effects and baseline commuting-zone controls. Standard errors clustered by state in parentheses. Observations weighted by commuting-zone population share. Coefficient differences across panels mainly reflect different first-stage loadings. Kleibergen--Paap $F$-statistics: Degree of Automation, 74.1; Importance of Repetition, 128.3; Routine Cognitive, 70.8; Routine Manual, 26.0. $^{***}$ $p<0.01$, $^{**}$ $p<0.05$, $^{*}$ $p<0.1$.
\end{minipage}
\end{table}

%% file: tables_duha/table_contract_complexity.tex
\begin{table}[htbp]
\centering
\caption{Contract Complexity and Vendor Diversity in SEC Filings}
\label{tab:contract_complexity}
\begin{threeparttable}
\begin{tabular}{lcccc}
\toprule
 & (1) & (2) & (3) & (4) \\
 & Log Word Count & Log Sections & Nunn Intensity & 1\{Tier-2 KW\} \\
\midrule
\multicolumn{5}{l}{\textit{Panel A: B2B $\times$ Year Trend (Firm + Year FE)}} \\[3pt]
B2B $\times$ Year & -0.00012$^{***}$ & -0.00018$^{***}$ & 0.00045$^{***}$ & 0.00008$^{***}$ \\
  & (0.00002) & (0.00003) & (0.00006) & (0.00001) \\
Firm FE & Yes & Yes & Yes & Yes \\
Year FE & Yes & Yes & Yes & Yes \\
Observations & 21135 & 18085 & 21135 & 21135 \\
Firms & 2304 & 2304 & 2304 & 2304 \\[6pt]
\multicolumn{5}{l}{\textit{Panel B: B2B $\times$ Year Trend by Contract Type (Firm + Year FE)}} \\[3pt]
Service Agreements & -0.00025$^{***}$ & -0.00045$^{***}$ & 0.00058$^{***}$ & 0.00010$^{***}$ \\
  & (0.00004) & (0.00006) & (0.00012) & (0.00001) \\
Supply Contracts & -0.00014$^{***}$ & -0.00013$^{***}$ & 0.00040$^{***}$ & 0.00006$^{***}$ \\
  & (0.00002) & (0.00004) & (0.00007) & (0.00001) \\
License Agreements & 0.00003 & -0.00005 & 0.00045$^{***}$ & 0.00012$^{***}$ \\
  & (0.00003) & (0.00004) & (0.00008) & (0.00001) \\
Observations (Service+Cr) & 16875 & 14473 & 16875 & 16875 \\
Observations (Supply+Cr) & 18643 & 15995 & 18643 & 18643 \\
Observations (License+Cr) & 17698 & 15225 & 17698 & 17698 \\[6pt]
\multicolumn{5}{l}{\textit{Panel C: Within-Firm Vendor Diversity (Firm FE, Year Trend)}} \\[3pt]
 & $\log(1+\text{dyads})$ & Dyads / Contract & \multicolumn{2}{c}{} \\
Year trend & 0.01765$^{***}$ & 0.01434$^{***}$ & & \\
 & (0.00190) & (0.00183) & & \\
Firm FE & Yes & Yes & & \\
Observations & 5628 & 5628 & & \\
Firms & 1360 & 1360 & & \\
\bottomrule
\end{tabular}
\begin{tablenotes}[flushleft]
\small
\item \textit{Notes.} Panels A and B use SEC EDGAR Exhibit 10 material contracts, 2000--2019, with B2B contracts (service, supply, license agreements) as treatment and credit agreements filed by the same firms as control. Panel A reports the differential annual trend for B2B contracts relative to credit agreements. A negative coefficient on Log Word Count says B2B contracts are shrinking faster than credit agreements filed by the same firms. Nunn Intensity is a weighted composite of relationship-specific language density per 1,000 words following Nunn (2007). 1\{Tier-2 KW\} is an indicator for whether the contract contains any of the vendor-management terms \emph{vendor}, \emph{service provider}, \emph{managed service}, \emph{service level}, or \emph{third party}. Panel B repeats the four specifications separately for each B2B contract type, each pooled with the same firms' credit agreements as control. Panels A and B include firm and year fixed effects with standard errors clustered by firm. Panel C reports within-firm year trends in distinct vendor counterparties (dyads) from a firm-year aggregation. The within-firm year trend on the dyad count is positive while the total number of B2B contracts filed is flat, indicating that firms are spreading vendor work across a wider set of outside companies. Panel C uses firm fixed effects only and does not have a credit-agreement control because credit filings have a single counterparty by construction. $^{***}p<0.01$, $^{**}p<0.05$, $^{*}p<0.10$.
\end{tablenotes}
\end{threeparttable}
\end{table}

%% file: tables_duha/table_govt_contract_categories.tex
\begin{table}[htbp]
\centering
\caption{Service Contract Governance by PSC Category, FY2007 vs.\ FY2019\label{tab:govt_contract_categories}}
\small
\begin{tabular}{cl rr rr r rr r}
\hline\hline
 & & \multicolumn{2}{c}{Transactions} & \multicolumn{3}{c}{Performance-Based (\%)} & \multicolumn{3}{c}{Fixed-Price (\%)} \\
\cmidrule(lr){3-4} \cmidrule(lr){5-7} \cmidrule(lr){8-10}
PSC & Category & FY07 & FY19 & FY07 & FY19 & $\Delta$ & FY07 & FY19 & $\Delta$ \\
\hline
L & Tech Representative & 3{,}673 & 2{,}081 & 15.8 & 55.8 & 40.0 & 61.4 & 58.5 & $-$2.9 \\
A & R\&D & 60{,}544 & 36{,}896 & 12.1 & 50.0 & 37.9 & 30.5 & 27.7 & $-$2.8 \\
Z & Maint/Repair Real Prop. & 54{,}904 & 37{,}556 & 3.1 & 40.2 & 37.1 & 96.3 & 98.0 & 1.7 \\
R & Prof/Admin/Mgmt & 162{,}908 & 173{,}263 & 20.1 & 55.4 & 35.3 & 57.8 & 72.7 & 14.9 \\
S & Utilities/Housekeeping & 73{,}413 & 49{,}254 & 21.7 & 55.6 & 33.9 & 87.2 & 92.6 & 5.4 \\
D & IT/Telecom & 55{,}106 & 83{,}965 & 27.9 & 61.1 & 33.2 & 64.1 & 88.7 & 24.6 \\
N & Install Equipment & 7{,}697 & 6{,}050 & 6.4 & 38.6 & 32.2 & 84.9 & 91.5 & 6.6 \\
W & Lease Equipment & 21{,}144 & 12{,}112 & 3.3 & 33.7 & 30.4 & 96.1 & 99.9 & 3.8 \\
J & Maint/Repair Equipment & 106{,}038 & 55{,}028 & 10.3 & 40.3 & 30.0 & 84.6 & 90.4 & 5.7 \\
H & Quality Control & 5{,}831 & 5{,}682 & 12.6 & 41.4 & 28.8 & 82.6 & 89.3 & 6.7 \\
G & Social Services & 4{,}679 & 3{,}412 & 11.2 & 39.9 & 28.6 & 89.7 & 92.3 & 2.7 \\
Q & Medical Services & 38{,}854 & 30{,}354 & 8.2 & 34.4 & 26.3 & 87.7 & 98.5 & 10.7 \\
B & Special Studies & 12{,}902 & 6{,}654 & 17.2 & 42.8 & 25.6 & 64.7 & 66.5 & 1.8 \\
F & Natural Resources & 17{,}899 & 19{,}386 & 16.9 & 42.1 & 25.2 & 75.3 & 88.2 & 13.0 \\
M & Operate Facilities & 5{,}288 & 5{,}053 & 48.3 & 72.6 & 24.3 & 47.9 & 57.8 & 9.9 \\
V & Transport/Travel & 24{,}527 & 15{,}455 & 22.0 & 43.6 & 21.5 & 87.1 & 98.9 & 11.8 \\
T & Photo/Mapping/Printing & 5{,}456 & 1{,}808 & 8.7 & 29.8 & 21.1 & 87.2 & 93.3 & 6.1 \\
X & Lease Facilities & 110{,}390 & 5{,}104 & 0.4 & 21.4 & 21.0 & 99.4 & 99.8 & 0.4 \\
K & Mod of Equipment & 2{,}166 & 1{,}826 & 32.5 & 53.1 & 20.6 & 36.6 & 42.7 & 6.1 \\
U & Education/Training & 28{,}173 & 13{,}267 & 29.8 & 39.8 & 10.0 & 87.3 & 94.0 & 6.7 \\
C & Architect/Engineering & 24{,}325 & 15{,}478 & 0.7 & 10.3 & 9.6 & 86.3 & 92.0 & 5.7 \\
Y & Construction & 22{,}693 & 16{,}086 & 0.2 & 6.8 & 6.6 & 96.5 & 99.2 & 2.7 \\
\hline\hline
\end{tabular}
\par\vspace{6pt}
\begin{minipage}{\textwidth}
\scriptsize
Source is FPDS/USAspending. Sample restricted to domestic, positive-obligation award transactions with service PSC codes (letter prefix). Performance-based acquisition indicates contracts specifying outcomes rather than processes (FAR Subpart 37.6). Fixed-price includes firm fixed price (J) and fixed price with economic price adjustment (K). Rows sorted by performance-based share change, largest at top. Categories with the largest shifts toward performance-based contracting (professional/administrative support, IT, facility maintenance) correspond to codifiable, monitorable service tasks. Categories with the smallest shifts (construction, architecture/engineering) involve bespoke, site-specific work that resists outcome-based specification.
\end{minipage}
\end{table}